\newcommand{\onehalf}{\mathbf{\tfrac{1}{2}}}
\newcommand{\Prop}{\mathtt{Prop}}
\newcommand{\Prob}{\mathsf{Pr}}
\newcommand{\LProb}{\mathcal{L}_\Prob}
\newcommand{\Lmodal}{\mathcal{L}_\Box}
\newcommand{\LCPL}{\mathcal{L}_{\{\sim,\wedge\}}}
\newcommand{\Luk}{{\normalfont{\textbf{{\L}}}}}
\newcommand{\KL}{{\normalfont\textbf{K{\L}}}}
\newcommand{\KLuk}{\mathbf{K}\Luk}
\newcommand{\KLukProdProb}{\KL\bm{\Pi}\onehalf^\mathsf{Pr}}
\newcommand{\KLukProd}{\KL\bm{\Pi}\onehalf}
\newcommand{\fb}{\mathsf{fb}}
\newcommand{\KLukAddProb}{\KLuk^\Prob_\Qmbb}
\newcommand{\Fmbb}{\mathbb{F}}
\newcommand{\Imbb}{\mathbb{I}}
\newcommand{\Nmbb}{\mathbb{N}}
\newcommand{\Qmbb}{\mathbb{Q}}
\newcommand{\Rmbb}{\mathbb{R}}
\newcommand{\Amsf}{\mathsf{A}}
\newcommand{\Cmsf}{\mathsf{C}}
\newcommand{\Gmsf}{\mathsf{G}}
\newcommand{\Wmsf}{\mathsf{W}}
\newcommand{\Bmc}{\mathcal{B}}
\newcommand{\Fmc}{\mathcal{F}}
\newcommand{\Imc}{\mathcal{I}}
\newcommand{\Lmc}{{\mathcal{L}}}
\newcommand{\Nmc}{{\mathcal{N}}}
\newcommand{\Omc}{{\mathcal{O}}}
\newcommand{\Pmc}{{\mathcal{P}}}
\newcommand{\Smc}{{\mathcal{S}}}
\newcommand{\Tmc}{{\mathcal{T}}}
\newcommand{\dmc}{\mathcal{d}}
\newcommand{\lmc}{\mathcal{l}}
\newcommand{\Ffrak}{\mathfrak{F}}
\newcommand{\Mfrak}{{\mathfrak{M}}}
\newcommand{\pfrak}{\mathfrak{p}}
\newcommand{\qfrak}{\mathfrak{q}}
\newcommand{\Kmbf}{\mathbf{K}}
\newcommand{\Mmbf}{\mathbf{M}}
\newcommand{\qmbf}{\mathbf{q}}
\newcommand{\pspace}{\mathsf{PSPACE}}
\newcommand{\np}{\mathsf{NP}}
\newcommand{\conp}{\mathsf{coNP}}
\newcommand{\TKLukProd}{\Tmc\!(\KLukProd_\fb)}
\theoremstyle{definition}
\newtheorem{definition}{Definition}
\theoremstyle{remark}
\newtheorem{remark}{Remark}
\newtheorem{example}{Example}
\newtheorem{convention}{Convention}
\tikzset{
   	state/.style={
	circle,
	inner sep=1pt, minimum width=8mm,
	draw=black, fill=black!15,
	font=\small
    }
}
\newcommand{\tot}{\leftrightarrow} %my usual equivalence command
\title{Complexity of Łukasiewicz Modal Probabilistic Logics}
\author{Daniil Kozhemiachenko
\institute{Aix Marseille Univ, CNRS, LIS, Marseille, France}
\email{daniil.kozhemiachenko@lis-lab.fr}
\and
Igor Sedlár\thanks{Work of the second author was supported by the grant 22-16111S of the Czech Science Foundation. We would like to thank the reviewers for their comments, which helped us improve the paper. 
}
\institute{The Czech Academy of Sciences, Institute of Computer Science, Prague, Czech Republic}
\email{sedlar@cs.cas.cz}
}
\begin{document}
\maketitle
\begin{abstract}
Modal probabilistic logics provide a framework for reasoning about probability in modal contexts, involving notions such as knowledge, belief, time, and action. In this paper, we study a particular family of these logics, extending the modal {\L}ukasiewicz many-valued logic. These logics are shown to be capable of expressing nuanced probabilistic concepts, including upper and lower probabilities. Our main contribution is a $\pspace$-completeness result for two variants of the local consequence problem, providing a precise computational characterisation.
\end{abstract}

\section{Introduction\label{sec:introduction}}
% Context
Probabilistic logics have been studied and applied in various fields for decades~\cite{Halpern2017,OgnjanovicEtAl2016}. In particular, they provide formal languages that express information about probabilistic models in a compact yet unambiguous way, and they can be used to facilitate and verify reasoning about these models. The \emph{many-valued approach} to probabilistic logic goes back to the work of Hájek et al.~\cite{Hajek1998,HajekEtAl1995,HajekEtAl2000,HajekHarmancova1995}, where `$\alpha$ is probable' is taken as an \emph{imprecise statement} whose truth value (or truth \emph{degree}) is identified with the probability of $\alpha$. This approach contrasts with the better-known logic of Fagin et al.~\cite{FaginHalpernMegiddo1990}, which uses formulas involving linear inequalities over probabilistic terms to represent \emph{precise statements} about probability -- statements which are either true or false. Many-valued probabilistic logics are tailored for expressing \emph{imprecise} information about probabilistic models in terms of properties that models do not simply ``have'' or ``not have'', but which models have \emph{to some degree}. A particular example are imprecise comparisons, indicating the degree to which two probabilities differ. Another advantage of the many-valued approach is its simpler syntax and axiomatisation. Nevertheless, as shown in \cite{BaldiCintulaNoguera2020}, the many-valued approach and the linear approach are linked by entailment-preserving translations.

% Contributions
In computer science, artificial intelligence and economics, \emph{modal} probabilistic logics are of particular importance. These logics formalise reasoning about probability in the presence of modal notions such as knowledge, belief, time and action \cite{Aumann1999a,BacchusEtAl1999,DoderPerovic2020,FaginHalpern1994}. In \cite{MajerSedlar2025}, a~many-valued framework for modal probabilistic logic was introduced by combining modal {\L}ukasiewicz logic \cite{HansoulTeheux2013} with the propositional probabilistic logics of Hájek et al. Decidability of the basic many-valued modal probabilistic logic and some of its extensions was established by reduction to basic non-probabilistic modal {\L}ukasiewicz logic, which is known to be decidable \cite{Vidal2021}. However, the reduction is not polynomial and so it does not by itself yield complexity results for many-valued modal probabilistic logics. This paper makes two contributions to this line of research. First, using the tableaux method, we show that the problem of deciding local consequence over \emph{finitely branching} frames is $\pspace$-complete even for a~rather expressive extension of the language used in \cite{MajerSedlar2025}. We also show that the problem of deciding local consequence over arbitrary frames is $\pspace$-complete if we restrict ourselves to a~specific fragment of the full language. Secondly, we argue that our framework is sufficiently expressive to formalise a~variety of practically relevant concepts and scenarios, including upper and lower probabilities and their imprecise comparisons. These two contributions together establish many-valued modal logics as a~viable framework for formalising the interplay between probability and modality. 

% Related work
% Structure of the paper
The paper is structured as follows. Section \ref{sec:frames} introduces our basic semantic structures, probabilistic frames, and provides several examples of these structures. It also discusses their relationship with similar structures proposed in the literature. Section \ref{sec:logics} discusses a~many-valued modal logic for reasoning about probabilistic frames. This logic is based on a~modal extension of $\Luk\bm{\Pi}\onehalf$ \cite{EstevaEtAl2001}, which is an expressive combination of {\L}ukasiewicz and product fuzzy logics \cite{Hajek1998}. Section~\ref{sec:express} shows how our logic can be used to represent and reason about a~variety of practically relevant concepts and scenarios. Section \ref{sec:results} establishes our main technical results on the $\pspace$-completeness of the local consequence problem (i) over finitely branching frames using the full language and (ii) over arbitrary frames using a~specific fragment of the full language. Section~\ref{sec:conclusion} concludes the paper and lists some tasks left for the future.

\section{Probabilistic frames}\label{sec:frames}
% Definition of modal probabilistic frames
% Examples

Recall that a~\emph{finitely additive probability measure} on a~Boolean algebra $X$ is a~function $P: X \to [0,1]$ such that (i) $P(1_X) = 1$ and (ii) $P(x \lor y) = P(x) + P(y)$ for all $x,y$ such that $x \land y = 0_X$. A probability measure on a finite Boolean algebra is uniquely determined by its values on the atoms of the algebra. %A probability measure $P$ on $2^K$ for a~set $K$ will sometimes be called a~probability measure on $K$. If $K$ is finite, then $P$~is uniquely determined by the values $P(\{ k \})$ for $k \in K$.

\begin{definition}[Probabilistic frames]\label{def:probabilisticframes}
Let $\Amsf$ be an at most countable set. A~\emph{probabilistic $\Amsf$-frame} is a~tuple $\Ffrak=\langle W,R,S,\mu\rangle$ such that $\langle W,R\rangle$ is a~Kripke $\Amsf$-frame comprising a~non-empty set $W$ and a~function $R$ from $\Amsf$ to binary relations on $W$, $S$ is a~Boolean subalgebra of $2^W$ containing $W$, and $\mu=\langle\mu_w\rangle_{w\in W}$ with each $\mu_w:S\rightarrow[0,1]$ being a~finitely additive probability measure on~$S$.
\end{definition}

The set of modal indices $\Amsf$ can be used to represent various kinds of objects, for instance, agents or actions. As usual in modal logic, Kripke $\Amsf$-frames represent a~collection of possible states of affairs, or \emph{possible worlds}, connected with \emph{accessibility relations} $R_a$ for $a \in \Amsf$. These relations can represent, for example, the qualitative uncertainty of agents ($wR_au$ means that the information of agent $a$ in world $w$ is not sufficient to exclude $u$ as a~possibility) or the effects of actions ($wR_au$ means that world $u$ is a~possible outcome of performing action $a$ in world $w$). The collection $S$ of subsets of $W$ represents \emph{events} (or \emph{propositions}). Intuitively, events correspond to features of possible worlds to which probabilities will be assigned. This assignment is carried out using the collection of functions $\mu$: for every world $w \in W$, $\mu_w(E)$ expresses the probability of event $E \in S$ \emph{in} world $w$ --- the crucial feature of \emph{modal} probabilistic frames is that the probabilities of events can vary from world to world, much as the truth values of propositions can vary from world to world in standard modal semantics. %That is, it may happen that $\mu_w(E) \neq \mu_u(E)$ for $w \neq u$.

\begin{example}\label{exam:robot1}
Suppose a~robot is sent to retrieve an item from a~warehouse. If the warehouse light is on (i.e.~$L$ is true), then the probability of retrieving the correct item (i.e.~$I$ being true) is $0.8$. On the other hand, if $L$ is false, then the probability of $I$ being true is $0.2$. There are two possible start states, which represent the situation before the robot operates and are distinguished by whether the light is on or off. There are also four possible end states, as shown in Figure~\ref{fig:robot-example}. The start states, $s_L$ and $s_{\neg L}$, are shown on the bottom line, and the end states $e_{L \land I}$, $e_{L \land \neg I}$, $e_{\neg L \land I}$ and $e_{\neg L \land \neg I}$ are shown on the top line. The arrows labelled with $b$ represent the possible outcomes of the robot's actions (it is assumed that the robot cannot turn the light switch). In the state $s_{L}$, the probability of the event $ E_I = \{ e_{L \land I}, e_{\neg L \land I}\}$, indicated using the thick circle, is $0.8$ and the probability of the event $E_{\neg I} = \{ e_{L \land \neg I}, e_{\neg L \land \neg I} \}$ is $0.2$. In the state $s_{\neg L}$, the probabilities are reversed: $E_{I}$ has probability $0.2$ and $E_{\neg I}$ has probability $0.8$.\footnote{We assume for the sake of simplicity that, in each end state $e$, the only event that has non-zero probability is $\{ e \}$ and that the probability of this event is $1$.} The arrows labelled by $a$ represent the action of switching the warehouse light switch before the robot operates. This action clearly changes the probability of $E_{I}$~--~the probability of $E_I$ increases when the light is switched on and decreases when it is switched off. 
\end{example}

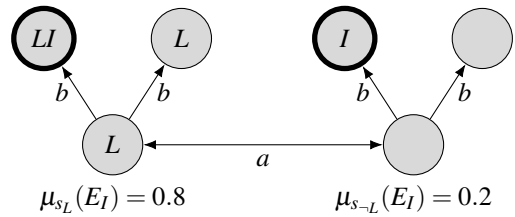
\begin{figure}
    \centering
    \begin{tikzpicture}[node distance=2cm]
        \node[state,label={below:{\small $\mu_{s_L}(E_{I}) = 0.8$ }}] (inL1) at (0,0) {$L$};
        \node[state,right of=inL1,xshift=2cm,label={below:{\small $\mu_{s_{\neg L}}(E_{I}) = 0.2$ }}] (inL0) { };
        \node[state,line width=2pt,above left of=inL1,xshift=5mm] (L1outI1) {$LI$};
        \node[state,above right of=inL1,xshift=-5mm] (L1outI0) {$L$};
        \node[state,line width=2pt,above left of=inL0,xshift=5mm] (L0outI1) {$I$};
        \node[state,above right of=inL0,xshift=-5mm] (L0outI0) { };
        \path[-Latex]
        (inL1) edge node[left] {\small $b$} (L1outI1)
        (inL1) edge node[right] {\small $b$} (L1outI0);
        \path[-Latex]
        (inL0) edge node[left] {\small $b$} (L0outI1)
        (inL0) edge node[right] {\small $b$} (L0outI0);
        \path[Latex-Latex]
        (inL1) edge node[below] {\small $a$} (inL0);
    \end{tikzpicture}
    \caption{The robot in a~warehouse example.}
    \label{fig:robot-example}
\end{figure}

\begin{example}\label{exam:robot2}
Take the scenario from Example~\ref{exam:robot1}, but now suppose we have an agent $a$ who lacks information about the warehouse lights. In standard probabilistic fashion, $a$'s uncertainty could be modelled \emph{quantitatively} by a~subjective probability distribution on $\{ s_L, s_{\neg L} \}$. However, such a~distribution expresses more information than we seem to have. We know that $s_L$ and $s_{\neg L}$ are possibilities for the agent, but we don't know how likely the agent thinks these possibilities are. Without this additional information, we can use the resources of epistemic modal logic and model the agent's uncertainty \emph{qualitatively} by a~binary equivalence relation. This is again depicted in Figure~\ref{fig:robot-example} if the arrow labelled by $a$ is seen as representing agent $a$'s qualitative uncertainty. The idea of modelling qualitative uncertainty by a~\emph{set} of probability distributions is an old one \cite[Ch.~2.3]{Halpern2017}.
\end{example}

\begin{example}\label{exam:MarkovChain}
Well-known kinds of probabilistic relational structures also provide examples of probabilistic frames. Example~\ref{exam:robot1} suggests the following construction. A \emph{discrete-time Markov chain} is a~triple $\Mmbf=\langle\{ 1, \ldots, n \},\pfrak,\qfrak\rangle$ where $N = \{ 1, \ldots, n \}$ represents a~set of states, $\pfrak:N\times N\to [0,1]$ is a~transition probability function such that $\sum_{j = 1}^{n}\pfrak(i,j) = 1$ for all $i \in  N$, and $\qfrak: N \to [0,1]$ is a~start probability function such that $\sum_{i = 1}^{n}\qfrak(i) = 1$. For each $m \in \Nmbb$, a~Markov chain $\Mmbf$ gives rise to a~probability measure on the set of all paths over $N$ of length $m$: $P(i_0,\ldots,i_m) = q(i_0) \cdot \prod_{j = 1}^{m}\pfrak(i_{j-1},i_j)$. Every $\Mmbf=\langle N=\{1,\ldots,n\},\pfrak,\qfrak\rangle$ can be seen as a~probabilistic $N$-frame $\Ffrak_\Mmbf=\langle N_0,R,2^S,\langle\mu_i\rangle_{i \in N_0}\rangle$ where $N_0 = N \cup \{ 0 \}$, $R_k(i,j)$ iff $j = k$ for $j,k \in N$ and $i \in N_0$, and $\mu_i (\{ j \}) =\pfrak(i,j)$ if $i,j \in N$ and $\mu_0(\{ i \})=\qfrak(i)$. %Probabilistic automata~\cite{Rabin1963} give rise to probabilistic frames in a similar way.
\end{example}
% \begin{example}
%     Markov decision processes?
% \end{example}
\begin{example}\label{example:multiagentprobabilisticframes}
The following is a~generalisation of probabilistic frames. A \emph{multi-agent probabilistic $\Amsf$-frame} is a~structure $\langle W, R, S, \mu \rangle$ such that $\langle W, R, S \rangle$ is as before and $\mu = \langle \mu_{w,a} \rangle_{w \in W, a~\in \Gmsf}$ for some $\Gmsf \subseteq \Amsf$ such that $\mu_{w,a} : S \to [0,1]$ is a~finitely additive probability measure. Modal indices in $\Gmsf$ represent agents and $\mu_{w,a}$ represents the subjective probability distribution of agent $a$ in $w$. That is, in multi-agent frames, probability measures $\mu$ have a world index \emph{and} an agent index while accessibility relations $R$ are indexed by all $a \in \Amsf$ as before. Probabilistic frames correspond to the special case with only one fixed agent index $a_0 \in \Amsf$ which can be omitted. Multi-agent probabilistic frames where $W$ is finite, $S = 2^W$, $\Amsf = \Gmsf$ and $R_a$ is an equivalence relation correspond to finite semantic knowledge-belief systems of \cite{Aumann1999a}, structures that extend finite Harsanyi type spaces \cite{HeifetzMongin2001} with S5-style knowledge operators. Intuitively, the relation $R_a$ represents the knowledge state of agent $a$ in a~manner similar to Example~\ref{exam:robot2}. We confine the more general multi-agent framework to this example mainly for reasons of notational simplicity.%; we comment on this again in the conclusion.
%Multi-agent example with subjective probabilities. (Important for TARK.) 
\end{example}
% \begin{example}
%     Dynamic Bayesian networks?
% \end{example}

% \begin{example}
%  Probabilistic protocols? \cite{Halpern2017}, 6.6-6.7.   
% \end{example}

\begin{remark}
Probabilistic frames are similar to structures that have appeared in the literature. Aumann's \emph{infinite} semantic knowledge-based systems \cite{Aumann1999a} assume that $S$ is closed under countable unions ($\sigma$-algebra) and that $\mu_a (\cdot, E) : w \mapsto \mu_{w,a}(E)$ is an $S$-measurable mapping from $W$ to $[0,1]$. This condition also appears in the definition of \emph{Markov kernels}, probabilistic transition systems on infinite state spaces. The condition allows to define probabilities of events of the form ``The probability of event $E$ is at least $r$'' for $r \in [0,1] \cap \Qmbb$; see \cite{HeifetzMongin2001,KozenEtAl2013,Zhou2007}. In this paper we focus only on probability assignments to Boolean events, and so we can use simpler structures. Fagin and Halpern~\cite{FaginHalpern1994} use similar $\Amsf$-frames $\langle W, R, \Pmc\rangle$ with $\Pmc(a, w)=\langle W_{a,w}, S_{a,w}, \mu_{a,w}\rangle$ being a~probability space for $W_{a,w}\subseteq W$, $S_{a,w}$ a~Boolean algebra of subsets of $W_{a,w}$ containing $W_{a,w}$, and $\mu_{a,w}$ a~probability measure on $S_{a,w}$. That is, not only the probability measure but also the space of measurable events can vary from world to world.\footnote{We note that Fagin and Halpern's frames do not require that $\mu_a (\cdot, E): w \mapsto \mu_{w,a}(E)$ be measurable; indeed, such a~requirement requires a~``common'' space of measurable events. Instead, they use inner measures to define probabilities of events of the form ``the probability of event $E$ is at least $r$''.} Our structures are simpler.   
\end{remark}

\section{Modal many-valued probabilistic logic}\label{sec:logics}
We have seen that probabilistic frames are a~general and versatile type of semantic structure. In this section, we introduce a~language based on many-valued modal logic for expressing and reasoning about properties of probabilistic frames. We build on the work of Hájek et al.~\cite{Hajek1998,HajekEtAl1995,HajekEtAl2000,HajekHarmancova1995}, who focus on variants of the well-known {\L}ukasiewicz fuzzy logic and its combinations with product fuzzy logic. The propositional fragment of our modal language can be seen as the language of $\Luk\bm{\Pi}\onehalf$ logic, a~combination of {\L}ukasiewicz and product logic and one of the most expressive propositional fuzzy languages.%

%In this paper, we focus on \emph{many-valued} modal probabilistic languages.

Let $\Prop$ be a~countably infinite set of propositional variables. Let $\LCPL$ be the set of Boolean terms generated via the following grammar:
\begin{align*}
\LCPL\ni\alpha&\coloneqq p\in\Prop\mid{\sim}\alpha\mid(\alpha\wedge
\alpha)
\end{align*}
We use Greek letters $\alpha, \beta$ to range over $\LCPL$.

\begin{definition}
The modal probabilistic language $\LProb^{\Amsf}$ is given by 
\begin{align*}
\LProb^{\Amsf}\ni\phi&\coloneqq\Prob(\alpha)\mid\neg\phi\mid(\phi\rightarrow\phi)\mid(\phi\bullet\phi)\mid(\phi\rightarrow_\Pi\phi)\mid\Box_a\phi\mid\onehalf
\end{align*}
for $\alpha \in \LCPL$ and $a \in \Amsf$. We will assume a~fixed $\Amsf$ from now on and omit explicit reference to it in most notation. In what follows, we will call formulas of the form $\Prob(\alpha)$ \emph{probabilistic atoms}.% Expansions of $\LProb$ with additional operators $\{ \star_i \mid i \in I \}$ will be denoted as $\LProb(\{ \star_i \mid i \in I \})$.
\end{definition}

\begin{convention}[Notation]\label{conv:notation}
Given $\phi\in\LProb$, we use $\Prop(\phi)$ to denote the set of propositional variables occurring in~$\phi$. Furthermore, $\lmc(\phi)$ denotes the length of~$\phi$, i.e., the number of occurrences of symbols in it. Similarly, for $\Gamma\subseteq\LProb$, we use $\Prob[\Gamma]$ and $\lmc[\Gamma]$ for the set of propositional variables in~$\Gamma$ and its length. Finally, given a~binary relation $R$ on~$W$, we set $R(w)\coloneqq\{w'\mid wRw'\}$.
\end{convention}
\begin{definition}[Modal depth]\label{def:modaldepth}
Let $\phi\in\LProb$. Its \emph{modal depth} ($\dmc(\phi)$) is defined inductively as follows: $\dmc(p)=0$, $\dmc(\neg\phi)=\dmc(\phi)$, $\dmc(\phi\circ\chi)=\max(\dmc(\phi),\dmc(\chi))$ (with $\circ\in\{\rightarrow,\bullet,\rightarrow_\Pi\}$), $\dmc(\Box_a\phi)=\dmc(\phi)+1$. 
% \begin{align*}
% \dmc(p)&=0&\dmc(\neg\phi)&=\dmc(\phi)&\dmc(\phi\rightarrow\chi)&=\max(\dmc(\phi),\dmc(\chi))&\dmc(\Box_a\phi)&=\dmc(\phi)+1
% \end{align*}
Given $\Gamma\subseteq\LProb$, the modal depth of $\Gamma$ is ($\dmc[\Gamma]$) is defined as follows: $\dmc[\Gamma]=\max\{\dmc(\phi)\mid\phi\in\Gamma\}$.
\end{definition}

\begin{definition}[Semantics of $\KLukProdProb$]\label{def:KLukProbmodels}
A~\emph{$\KLukProdProb$-model} is a~pair $\Mfrak=\langle\Ffrak,E\rangle$ with $\Ffrak$ a~\emph{probabilistic} frame and an \emph{event function} or \emph{valuation} $E:\Prop\rightarrow S$ extended to a~homomorphism $\LCPL \rightarrow S$ in an obvious way. The~\emph{probabilistic interpretation induced by $\Mfrak$} is a~function $\Imc_\Mfrak:\LProb\times W\rightarrow[0,1]$ s.t.\ $\Imc_\Mfrak(\onehalf,w)=\frac{1}{2}$ and
\begin{align*}
\Imc_\Mfrak(\Prob(\alpha),w) &= \mu_w(E(\alpha)) &
\Imc_\Mfrak(\phi\!\rightarrow\!\chi,w)&=\min(1,1-\Imc_\Mfrak(\phi,w)+\Imc_\Mfrak(\chi,w))\\
\Imc_\Mfrak(\phi\bullet\chi,w)&=\Imc_\Mfrak(\phi,w)\cdot\Imc_\Mfrak(\chi,w)&\Imc_\Mfrak(\phi\rightarrow_\Pi\chi,w)&=\begin{cases}1&\text{if }\Imc_\Mfrak(\phi,w)\leq\Imc_\Mfrak(\chi,w)\\\frac{\Imc_\Mfrak(\chi,w)}{\Imc_\Mfrak(\phi,w)}&\text{otherwise}\end{cases}\\
\Imc_\Mfrak(\neg\phi,w)&=1-\Imc_\Mfrak(\phi,w) &
\Imc_\Mfrak(\Box_a\phi,w)&=\inf\{\Imc_\Mfrak(\phi,w')\mid w'\in R_a(w)\}
\end{align*}
%The notions of validity and entailment are also as in Definition~\ref{def:KLukmodels}.
We say that $\phi\in\LProb$ is \emph{$\KLukProdProb$-valid on a~pointed frame~$\langle\Ffrak,w\rangle$} (denoted $\Ffrak,w\models_\Prob\phi$) if $\Imc_\Mfrak(\phi,w)=1$ for every $E: \Prop\to S$; $\phi$ is \emph{$\KLukProdProb$-valid on~$\Ffrak$} ($\Ffrak\models_\Prob\phi$) if $\Ffrak,w\models_\Prob\phi$ for every $w\in\Ffrak$. Given a~class $\Fmbb$ of probabilistic frames, we say that $\phi$ is \emph{$\KLukProdProb$-valid in~$\Fmbb$} ($\Fmbb\models_\Prob\phi$) if $\Ffrak\models_\Prob\phi$ for every $\Ffrak\in\Fmbb$. Finally, \emph{$\phi$ is $\KLukProdProb$-valid} ($\KLukProdProb\models\phi$) if it is valid on all frames, and $\Gamma\subseteq\LProb$ \emph{entails} $\chi$ ($\Gamma\models_\Prob\chi$) if $\Imc_\Mfrak(\chi,w)=1$ in every $\KLukProdProb$-model $\Mfrak$ and $w\in\Mfrak$ s.t.\ $\Imc_\Mfrak(\phi,w)=1$ for all $\phi\in\Gamma$.
\end{definition}

Other Łukasiewicz connectives can be defined in a~standard manner:
\begin{align*}
\top&\coloneqq\Prob(p)\rightarrow\Prob(p)&\triangle\phi&\coloneqq\neg\phi\rightarrow_\Pi\bot&\phi\oplus\chi&\coloneqq\neg\phi\rightarrow\chi&\phi\odot\chi&\coloneqq\neg(\neg\phi\oplus\neg\chi)\\
\bot&\coloneqq\neg\top&
\Diamond_a \phi &\coloneqq \neg \Box_a \neg \phi &\phi\vee\chi&\coloneqq(\phi\rightarrow\chi)\rightarrow\chi&\phi\wedge\chi&\coloneqq\neg(\neg\phi\vee\neg\chi)\\
&&&&\phi\leftrightarrow\chi&\coloneqq(\phi\rightarrow\chi)\odot(\chi\rightarrow\phi)
\end{align*}
Using Definition~\ref{def:KLukProbmodels}, one can obtain their semantics:
\begin{align*}
\Imc_\Mfrak(\top,w)&=1&\Imc_\Mfrak(\phi\oplus\chi,w)&=\min(1,\Imc_\Mfrak(\phi,w)+\Imc_\Mfrak(\chi,w))\\
\Imc_\Mfrak(\bot,w)&=0&\Imc_\Mfrak(\phi\odot\chi,w)&=\max(0,\Imc_\Mfrak(\phi,w)+\Imc_\Mfrak(\chi,w)-1)\\
\Imc_\Mfrak(\triangle\phi,w)&=\begin{cases}1&\text{if }\Imc_\Mfrak(\phi,w)=1\\0&\text{otherwise}\end{cases}&
\begin{matrix}
\Imc_\Mfrak(\phi\vee\chi,w)\\
\Imc_\Mfrak(\phi\wedge\chi,w)
\end{matrix}
&
\begin{matrix}
~=\max(\Imc_\Mfrak(\phi,w),\Imc_\Mfrak(\chi,w))\\
~=\min(\Imc_\Mfrak(\phi,w),\Imc_\Mfrak(\chi,w))
\end{matrix}\\
\Imc_\Mfrak(\phi\leftrightarrow\chi,w)&=1-|\Imc_\Mfrak(\phi,w)-\Imc_\Mfrak(\chi,w)|&\Imc_\Mfrak(\Diamond_a \phi, w) & = \sup \{ \Imc_\Mfrak(\phi, w') \mid w' \in R_a (\phi) \}
\end{align*}
% We will sometimes use propositional $n$-ary connective symbols, primitive or defined, to denote the respective semantic functions $[0,1]^n \to [0,1]$. 

We note that the semantics of $\KLukProdProb$ can be equivalently reformulated if we redefine frames by allowing $S$ to be an arbitrary Boolean algebra independent from $W$. We call these frames and their respective models \emph{sample-independent} (SI-frames and SI-models).
\begin{definition}\label{def:sampleindependentmodels}~
\begin{itemize}
\item An \emph{SI probabilistic $\Amsf$-frame over $X$} (SI-frame) is a~tuple $\Ffrak=\langle W,R,X,\mu\rangle$ with $\langle W,R\rangle$ being a~Kripke $\Amsf$-frame, $X\neq\varnothing$ a~Boolean algebra, and $\mu=\langle\mu_w\rangle_{w\in W}$ being a~tuple of finitely additive probability measures on~$X$.
\item An \emph{SI-$\KLukProdProb$-model} is a~tuple $\Mfrak=\langle\Ffrak,E\rangle$ with $\Ffrak$ being a~sample-independent probabilistic frame and $E:\Prop\rightarrow X$. The notions of probabilistic interpretations induced by~$\Mfrak$, entailment, and validity are the same as in Definition~\ref{def:KLukProbmodels}.
\end{itemize}
% Given $\phi\in\LProb$, we define its \emph{canonical SI-model} as follows: $\Mfrak^\Cmsf_\phi=\langle W,R,2^{\Prop(\phi)},\mu^\Cmsf,E^\Cmsf\rangle$ with $E^\Cmsf(p)=\{X\mid p\in X, X\subseteq\Prop(\phi)\}$.
\end{definition}

\begin{restatable}{theorem}{semanticsequivalence}\label{theorem:semanticsequivalence}
Let $\phi\!\in\!\LProb$. Then $\phi$ is~$\KLukProdProb$-valid iff $\phi$ is~$\KLukProdProb$-valid on SI-models.
\end{restatable}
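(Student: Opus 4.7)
One direction is immediate: every probabilistic frame $\langle W, R, S, \mu \rangle$ is already an SI-frame once we view its event algebra $S$ abstractly as an arbitrary Boolean algebra $X$, with the same worlds, accessibility relations, measures, and valuations. Hence $\KLukProdProb$-validity on all SI-models implies $\KLukProdProb$-validity on all probabilistic models.

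For the converse, the plan is to take an arbitrary SI-model $\Mfrak = \langle \langle W, R, X, \mu\rangle, E \rangle$ and build a standard probabilistic model $\Mfrak' = \langle \langle W', R', S', \mu'\rangle, E'\rangle$ with $\Imc_{\Mfrak'}(\phi,(w,y)) = \Imc_{\Mfrak}(\phi,w)$ for every $\phi\in\LProb$ and every $(w,y)$; any refutation of $\phi$ in $\Mfrak$ then transfers to $\Mfrak'$. The construction uses Stone representation: let $Y$ be the set of ultrafilters of $X$ and $f : X \to 2^Y$, $f(x) = \{U \in Y : x \in U\}$, the canonical injective Boolean homomorphism. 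Set $W' = W \times Y$, $R'_a = \{((w,y),(v,y)) : w R_a v,\ y \in Y\}$ (so $R'_a$ acts on the first coordinate and ignores the second), $S' = \{W \times f(x) : x \in X\}$, $\mu'_{(w,y)}(W \times f(x)) = \mu_w(x)$, and $E'(p) = W \times f(E(p))$. Routine checks using injectivity of $f$ and the Boolean-homomorphism property show that $S'$ is a Boolean subalgebra of $2^{W'}$ containing $W' = W \times f(1_X)$, that each $\mu'_{(w,y)}$ is well defined and finitely additive, and that $E'$ extends to Boolean terms as $E'(\alpha) = W \times f(E(\alpha))$.

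The key step is then an induction on $\phi \in \LProb$ proving $\Imc_{\Mfrak'}(\phi,(w,y)) = \Imc_{\Mfrak}(\phi,w)$ for every $(w,y) \in W'$. Probabilistic atoms follow from $\mu'_{(w,y)}(E'(\alpha)) = \mu_w(E(\alpha))$; the constant $\onehalf$ and the connectives $\neg, \rightarrow, \bullet, \rightarrow_\Pi$ are immediate from the induction hypothesis; the $\Box_a$ case uses that the $R'_a$-successors of $(w,y)$ are exactly the pairs $(v,y)$ with $w R_a v$, so the infimum defining $\Imc_{\Mfrak'}(\Box_a\phi,(w,y))$ coincides with the one defining $\Imc_{\Mfrak}(\Box_a\phi,w)$.

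I do not expect any genuine obstacle here. The only slightly delicate design choice is that the event algebra of $\Mfrak'$ sits ``parallel'' to the world structure via the product $W \times Y$, with $R'_a$ oblivious to the $Y$-coordinate; this is what ensures that modal operators still see exactly the same set of $\Imc$-values at every modal depth, so the induction goes through.
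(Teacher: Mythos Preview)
Your argument is correct and takes a genuinely different route from the paper. The paper's proof of the nontrivial direction simply pads the world set: given an SI-countermodel $\Mfrak = \langle W, R, X, \mu, E\rangle$ with $|X| > |2^W|$, it adjoins finitely many isolated fresh states $w_1,\ldots,w_n$ so that $|X| = 2^{|W'|}$, keeps $X$, $\mu$, $E$ unchanged, and observes that $\Imc$-values at the old states are unaffected since the new states are $R$-isolated. Your construction instead invokes Stone representation and forms the product $W' = W \times Y$ with $Y$ the ultrafilter space of $X$, letting $R'$ ignore the second coordinate and embedding $X$ into $2^{W'}$ via $x \mapsto W \times f(x)$. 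The paper's approach is lighter (no ultrafilters, no choice principle) and keeps the original worlds as a subset of the new model, but as written it tacitly assumes $X$ is finite---the cardinality bookkeeping $|X| = 2^{|W|+n}$ and the implicit identification of $X$ with a subalgebra of $2^{W'}$ both rely on this. Your approach handles arbitrary $X$ uniformly and makes the embedding of $X$ into $2^{W'}$ explicit, at the modest cost of the Boolean prime ideal theorem; for finite $X$, of course, your $Y$ is just the set of atoms of $X$ and no choice is needed.
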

\begin{proof}
Clearly, if $\phi$ is~$\KLukProdProb$-valid on SI-models, then it is valid on $\KLukProdProb$-models. For the converse, let $\phi$ be \emph{not} $\KLukProdProb$-valid on SI-models. Thus, $\Imc_\Mfrak(\phi,w)=x<1$ for some SI-model $\Mfrak=\langle W,R,X,\mu,E\rangle$. If $|X|\leq|2^W|$, it is clear, that $\Mfrak$ is also a~$\KLukProdProb$-model. Otherwise let $|X|=2^{|W|+n}$. We define a~$\KLukProdProb$ model $\Mfrak'=\langle W',R',X,\mu,E\rangle$ as follows. Let $W'=W\cup\{w_1,\ldots,w_n\}$ for new states $w_1$, \ldots, $w_n$; let further, $wR'_aw'$ iff $\{w,w'\}\subseteq W$ and $wR_aw'$. Now, using that the measures and valuations of propositional variables in~$\Mfrak$ and $\Mfrak'$ coincide, one can show by induction on $\phi$ that $\Imc_\Mfrak(\phi,w)=\Imc_{\Mfrak'}(\phi,w)$ for every $w\in W$. The result follows.
\end{proof}

An important property of SI-models is that given~$\Mfrak$ and $\phi\in\LProb$, one can assume that $X=2^{2^\Prop(\phi)}$. Intuitively, this means that to verify whether an $\LProb$-formula is valid, it suffices to check probability distributions over Boolean terms that can be composed of its variables.
\begin{restatable}{proposition}{simplification}\label{prop:simplification}
Let $\Mfrak=\langle W,R,X,\mu,E\rangle$ be an SI-model. Define $\Mfrak^\Cmsf_\phi=\langle W,R,2^{2^{\Prop(\phi)}},\mu^\Cmsf,E^\Cmsf\rangle$ s.t.
\begin{align*}
\forall p\in\Prop:E^\Cmsf(p)=\{Y\mid p\in Y,~Y\subseteq\Prop(\phi)\}\\
\forall w\in W,Y\subseteq\Prop(\phi):\mu^\Cmsf_w(\{Y\})=\mu_w\bigg(E\bigg(\bigwedge\limits_{p\in Y}\!\!p\wedge\bigwedge\limits_{q\notin Y}\!\!\neg q\bigg)\bigg)
\end{align*}
Then $\Imc_\Mfrak(\phi,w)=\Imc_{\Mfrak^\Cmsf_\phi}(\phi,w)$ for every $w\in W$ and $\phi\in\LProb$.
\end{restatable}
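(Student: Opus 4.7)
I would proceed by induction on the complexity of $\phi$. For every connective other than probabilistic atoms the semantics depends only on the interpretations of proper subformulas (and, for $\Box_a$, on the accessibility relation $R_a$, which is unchanged). Since the frames $\Mfrak$ and $\Mfrak^\Cmsf_\phi$ share the same carrier $W$ and the same relations $R$, these inductive cases are routine. All the content of the proposition therefore sits in the base case, where $\phi = \Prob(\alpha)$ for some $\alpha\in\LCPL$ with $\Prop(\alpha)\subseteq\Prop(\phi)$: we must show $\mu_w(E(\alpha))=\mu^\Cmsf_w(E^\Cmsf(\alpha))$.

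The key device is the usual disjunctive-normal-form decomposition. For each $Y\subseteq\Prop(\phi)$ set $\alpha_Y:=\bigwedge_{p\in Y}p\wedge\bigwedge_{q\in\Prop(\phi)\setminus Y}{\sim} q$. These Boolean terms are pairwise contradictory and their disjunction is a tautology, so in any Boolean algebra and any homomorphism from $\LCPL$ to it, the images of the $\alpha_Y$ form a pairwise disjoint family summing to the top element. Since classically $\alpha\equiv\bigvee_{Y\models\alpha}\alpha_Y$ whenever $\Prop(\alpha)\subseteq\Prop(\phi)$, applying the homomorphism $E$ gives $E(\alpha)=\bigsqcup_{Y\models\alpha}E(\alpha_Y)$, a disjoint join in $X$. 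Finite additivity of $\mu_w$ then yields
\[
\mu_w(E(\alpha))=\sum_{Y\models\alpha}\mu_w(E(\alpha_Y)).
\]

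On the other side, a short induction on the structure of Boolean terms shows that $E^\Cmsf(\beta)=\{Y\subseteq\Prop(\phi)\mid Y\models\beta\}$ for every $\beta$ with $\Prop(\beta)\subseteq\Prop(\phi)$: the base clause is exactly the definition of $E^\Cmsf(p)$, and the clauses for ${\sim}$ and $\wedge$ are immediate from how a valuation $Y$ evaluates these connectives. Hence
\[
\mu^\Cmsf_w(E^\Cmsf(\alpha))=\sum_{Y\models\alpha}\mu^\Cmsf_w(\{Y\})=\sum_{Y\models\alpha}\mu_w(E(\alpha_Y)),
\]
by the very definition of $\mu^\Cmsf_w$ on singletons, matching the previous display.

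The main things to be careful about are (i) working inside an abstract Boolean algebra $X$ rather than a powerset, so that the DNF identity must be justified via the homomorphism property of $E$ and the equational theory of Boolean algebras rather than by pointwise reasoning on worlds; and (ii) making sure the measures $\mu^\Cmsf_w$ defined only on singletons of $2^{\Prop(\phi)}$ indeed extend uniquely to a finitely additive probability on $2^{2^{\Prop(\phi)}}$ with total mass $1$, which follows because $\{\alpha_Y\mid Y\subseteq\Prop(\phi)\}$ is a finite partition of the top element and its $\mu_w$-values therefore sum to $1$. Once these points are in place, the probabilistic-atom case is complete and the induction closes.
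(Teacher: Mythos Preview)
Your proposal is correct and follows essentially the same approach as the paper's proof: induction on $\phi$, with the only substantive case being the probabilistic atom, handled via the full DNF over $\Prop(\phi)$ and finite additivity. If anything, you are more careful than the paper in spelling out that $E^\Cmsf(\beta)=\{Y\mid Y\models\beta\}$, that the DNF argument goes through in an abstract Boolean algebra via the homomorphism property of $E$, and that the atom-values of $\mu^\Cmsf_w$ really sum to $1$; the paper leaves these points implicit.
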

\begin{proof}
Induction on $\phi$. For the basis case, we let w.l.o.g.\ $\phi=\Prob(\alpha)$ for some $\alpha\!=\!\bigvee\limits^n_{i=1}\tau_i$ in the DNF each part of which contains all variables of $\phi$. Now, $\Imc_\Mfrak(\Prob(\alpha),w)\!=\!x$ means that $\mu_w\left(E\left(\bigvee\limits^{n}_{i=1}\tau_i\right)\right)\!=\!x$, i.e., $\sum\limits^n_{i=1}\mu_w(E(\tau_i))\!=\!x$. From here, using the fact that $Y\vDash\tau$ iff $Y=\{p\mid p\in\tau\}$, we obtain that $\mu^\Cmsf_w(E^\Cmsf(\tau_i))=\mu_w((\tau_i))$ for every $i\in\{1,\ldots,n\}$. Thus, $\sum\limits^n_{i=1}\mu^\Cmsf_w(E^\Cmsf(\tau_i))=\sum\limits^n_{i=1}\mu_w(E(\tau_i))=x$, as required.

The cases of propositional connectives can be shown by a~straightforward application of the induction hypothesis. The case of $\phi=\Box\chi$ follows easily since $\langle W,R\rangle$-reducts in $\Mfrak$ and $\Mfrak^\Cmsf_\phi$ coincide. 
\end{proof}

In what follows, we will use $\Lmodal$ to denote the fragment of $\LProb$ s.t.\ only the probabilistic atoms of the form $\Prob(p)$ with $p\in\Prop$ are allowed. We will also use $\KLukProd$ to denote the $\Lmodal$-fragment of $\KLukProdProb$ and $\models$ for its entailment (satisfaction) relation. As every two distinct probabilistic atoms concern two \emph{independent events}, it is clear that $\KLukProd$-models can be built on Kripke frames, \emph{ignoring measures}, i.e., treating each $\Prob(p)$ as a~\emph{propositional variable}. Namely, given a~\emph{Kripke frame} $\Ffrak=\langle W,R\rangle$, and a~map $v:\Prop\times W\rightarrow[0,1]$, a~\emph{$\KLukProd$-model} is a~pair $\Mfrak=\langle\Ffrak,v\rangle$. A~\emph{$\KLukProd$-interpretation $\Imc_\Mfrak$ induced by~$\Mfrak$} is defined as for $\KLukProdProb$ with the exception that $\Imc_\Mfrak(\Prob(p),w)=v(p,w)$. %It is evident that $\KLukProdProb$ is a~conservative extension of~$\KLukProd$.
% It is clear that, for $\Gamma \cup \{ \chi \} \subseteq \Lmodal$, $\Gamma \models \chi$ iff $\Gamma \models_\Prob \chi$; see Proposition \ref{prop:probandfuzzy} in the appendix.  
%
%It is clear that the following statement holds.
\begin{restatable}{proposition}{probandfuzzy}\label{prop:probandfuzzy}
For finite $\Gamma \cup \{ \chi \} \subseteq \Lmodal$, $\Gamma \models \chi$ iff $\Gamma \models_\Prob \chi$.  
\end{restatable}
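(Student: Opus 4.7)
The plan is to prove the two directions separately, noting that the nontrivial direction hinges on the syntactic restriction that in $\Lmodal$ the only probabilistic atoms are $\Prob(p)$ for \emph{propositional variables}~$p$, so distinct atoms may be freely interpreted as independent events.

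First I would dispatch the implication $\Gamma\models\chi\Rightarrow\Gamma\models_\Prob\chi$ directly. Given any $\KLukProdProb$-model $\Mfrak=\langle W,R,S,\mu,E\rangle$, build a $\KLukProd$-model $\Mfrak^\flat=\langle W,R,v\rangle$ over the same Kripke frame by setting $v(p,w)\coloneqq\mu_w(E(p))$. A routine induction on $\phi\in\Lmodal$ shows $\Imc_{\Mfrak^\flat}(\phi,w)=\Imc_\Mfrak(\phi,w)$ for all $w$, so any $\KLukProdProb$-countermodel for $\Gamma\models_\Prob\chi$ immediately yields a $\KLukProd$-countermodel.

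The substantive direction is $\Gamma\models_\Prob\chi\Rightarrow\Gamma\models\chi$. Assume $\Gamma\not\models\chi$, witnessed by some $\KLukProd$-model $\Mfrak=\langle W,R,v\rangle$ and state $w_0$, and let $P$ be the (finite) set of propositional variables occurring in $\Gamma\cup\{\chi\}$. I would construct an SI-$\KLukProdProb$-model $\Mfrak'=\langle W,R,2^{2^P},\mu,E\rangle$ that replicates $\Mfrak$ on $\Lmodal$. Set $E(p)\coloneqq\{Y\subseteq P\mid p\in Y\}$ for $p\in P$ (and extend $E$ arbitrarily on $\Prop\setminus P$), and define $\mu_w$ on the atoms of $2^{2^P}$ as the product measure making the events $E(p)$ \emph{independent} with marginals $v(p,w)$:
\[
\mu_w(\{Y\})=\prod_{p\in Y}v(p,w)\cdot\prod_{q\in P\setminus Y}(1-v(q,w)) \qquad (Y\subseteq P).
\]
A direct computation gives $\mu_w(E(p))=v(p,w)$, and then a routine induction on $\phi\in\Lmodal$ with $\Prop(\phi)\subseteq P$ yields $\Imc_{\Mfrak'}(\phi,w)=\Imc_\Mfrak(\phi,w)$ at every $w$. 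Invoking Theorem~\ref{theorem:semanticsequivalence} converts this SI-countermodel into an honest $\KLukProdProb$-countermodel, finishing the argument.

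The only real obstacle is producing a probability measure with prescribed marginals, and the product-of-Bernoullis construction makes this trivial precisely because the fragment $\Lmodal$ imposes no Boolean dependency between the probabilistic atoms. The remaining checks---the inductive compatibility of the two interpretations on propositional connectives and on $\Box_a$, and the passage from SI- to standard models---are routine.
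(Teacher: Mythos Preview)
Your argument is correct. The first direction (contrapositive: from a $\KLukProdProb$-countermodel to a $\KLukProd$-countermodel via $v(p,w)\coloneqq\mu_w(E(p))$) is exactly what the paper does.

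The second direction differs. The paper takes the smaller Boolean algebra $X=2^{P}$ with $P=\Prop[\Gamma\cup\{\chi\}]$, sets $E(p)=\{p\}$, and declares $\mu_w(\{p\})=v(p,w)$ on the atoms; it then appeals to (the local-consequence version of) Theorem~\ref{theorem:semanticsequivalence}. Your route instead passes to $2^{2^{P}}$ and builds $\mu_w$ as a product of Bernoullis, so that the events $E(p)=\{Y\subseteq P\mid p\in Y\}$ are independent with the prescribed marginals $v(p,w)$. Your construction is slightly heavier but has the advantage of being unconditionally correct: the paper's assignment $\mu_w(\{p\})=v(p,w)$ on the atoms of $2^{P}$ only extends to a finitely additive \emph{probability} measure when $\sum_{p\in P}v(p,w)=1$, which is not guaranteed by an arbitrary $\KLukProd$-valuation. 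So your product-measure detour is not just a stylistic variant; it sidesteps a genuine gap in the paper's written argument. One minor point: Theorem~\ref{theorem:semanticsequivalence} is stated for validity, so, like the paper, you are tacitly using its obvious extension to local consequence.
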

\begin{proof}
If $\Gamma \not\models_\Prob \chi$, then take any countermodel $\Mfrak$ and construct a $\KLukProd$-model $\Mfrak'$ by $v(p, w) := \Imc_\Mfrak (\Prob(p), w)$. It is clear that $\Imc_\Mfrak (\phi, w) = \Imc_{\Mfrak'}(\phi, w)$ for all $\phi \in \Lmodal$ and so $\Gamma \not\models \chi$. Conversely, if $\Gamma \not\models \chi$, then take any countermodel $\Mfrak$ and define an $SI$-model $\Mfrak'$ where $X=2^{\Prop[\Gamma\cup\{\chi\}]}$ %is the power set of the set of $p \in \Prop$ that occur in $\Gamma \cup \{ \chi \}$ 
and where $\mu_w$ is determined by $\mu_w (\{ p \})=v(p,w)$. Moreover, let $E(p) = \{ p \}$ if $p$ occurs in $\Gamma \cup \{ \chi \}$ and let $E(p) = \varnothing$ otherwise. It is clear that $\Imc_\Mfrak (\phi, w) = \Imc_{\Mfrak'}(\phi, w)$ for all $\phi \in \Lmodal$ such that $\phi$ is a~subformula of a~formula occurring in $\Gamma \cup \{ \chi \}$. Hence, by the obvious extension of Theorem~\ref{theorem:semanticsequivalence} to local consequence, $\Gamma \not\models_\Prob \chi$.
\end{proof}

\section{Expressivity}\label{sec:express}
% Do we want the elimination (admissibility) of constants here? % Yes
% Examples of how the modal Łukasiewicz probabilistic language can express interesting assertions about probabilistic models.
% Give motivation for fb-frames, introduce $\models^\fb$. % Will do.

% Note that, for $r, r' \in [0,1]$, $r \to r' = 1 - \max \{ r - r', 0 \}$ is inversely proportional to the truncated difference between $r$ and $r'$.
Note that the truth degree of an implication $\phi\rightarrow\psi$ is inversely proportional to the truncated difference between the truth degrees of $\phi$ and $\psi$.\footnote{In particular, $\Imc_\Mfrak(\phi \to \psi, w) = 1 - \max \{ \Imc_\Mfrak (\phi, w) - \Imc_\Mfrak (\psi, w), 0\}$.} Thus, $\phi \rightarrow \psi$ can be read as ``The truth degree of $\phi$ is \emph{not much higher} than the truth degree of $\psi$''. If the truth degree of $\phi$ is less or equal to the truth degree of $\psi$, 
%($r \leq r'$), 
then the truth degree of $\phi \rightarrow \psi$ is~$1$. If, on the other hand, the truth degree of $\phi$ is higher than the truth degree of $\psi$, 
%($r > r'$), 
then the truth degree of $\phi \rightarrow \psi$ is $1$ minus the difference. Similarly, the truth degree of $\phi\leftrightarrow\psi$ 
% $r \leftrightarrow r'$
is inversely proportional to the \emph{absolute difference} between the values of $\phi$ and $\psi$. Consequently, we can read $\phi \tot \psi$ as ``The truth degree of $\phi$ does not \emph{differ much} from the truth degree of $\psi$''.

Recall that, given a~set $\Pmc$ of probability measures on a~Boolean algebra $X$, the \emph{lower} and \emph{upper probability} of $x \in X$, respectively, are defined as $\Pmc_*(x) = \inf \{ \mu(x) \mid \mu \in \Pmc \}$ and $\Pmc^*(x) = \sup \{ \mu(x) \mid \mu \in \Pmc \}$ 
% \[\Pmc_*(x) = \inf \{ \mu(x) \mid \mu \in \Pmc \} \qquad\Pmc^*(x) = \sup \{ \mu(x) \mid \mu \in \Pmc \} \]
(see~\cite{HalpernPucella2002} and~\cite[Ch.~2.3]{Halpern2017}). Lower and upper probabilities are useful for expressing qualitative uncertainty. For example, the interval $[\Pmc_*(x), \Pmc^*(x)]$ can be seen as putting lower and upper bounds on an agent's ignorance concerning the probability of $x$. A crucial feature of the many-valued modal framework is that lower and upper probabilities are directly expressed by formulas of the language. Given a~model $\Mfrak$ and a~world $w$, the modal formula $\Box_a \Prob (\alpha)$ can be seen as a probabilistic term expressing the lower probability of $\alpha$ with respect to $\Pmc = \{ \mu_{w'} \mid w' \in R_a(w) \}$ since \[ \Imc_\Mfrak (\Box_a \Prob (\alpha), w) = \inf \{ \Imc_\Mfrak (\Prob (\alpha), w') \mid w' \in R_a(w) \} =  \inf \{ \mu_{w'}(E(\alpha)) \mid w' \in R_a(w) \} = \Pmc_*(E(\alpha)) \, .\] 
Similarly, $\Diamond_a \Prob (\alpha)$ expresses the upper probability of $\alpha$. It follows that, on the epistemic interpretation of $a \in \Amsf$, the formula $\Box_a \Prob (\alpha) \tot \Diamond_a \Prob (\alpha)$ expresses the amount of qualitative certainty of agent $a$ concerning $\alpha$. In particular, the truth degree of $\Box_a \Prob (\alpha) \tot \Diamond_a \Prob (\alpha)$ is inversely proportional to the size of the interval $[\Pmc_*(E(\alpha)), \Pmc^*(E(\alpha))]$. Let us write $\mathsf{Cert}_a(\alpha)$ instead of $\Box_a \Prob (\alpha) \tot \Diamond_a \Prob (\alpha)$. We can use {\L}ukasiewicz implication to express comparisons between the amounts of qualitative uncertainty of two agents. For instance, $\mathsf{Cert}_a(\alpha) \to \mathsf{Cert}_b(\alpha)$ says that $a$ is not much more certain about $\alpha$ than $b$.\footnote{A similar interpretation applies, of course, if we read $a \in \Amsf$ as actions. Then $\mathsf{Cert}_a(\alpha)$ expresses the extent to which action $a$ ``fixes'' the probability of $\alpha$.}

\begin{remark}
Halpern and Pucella \cite{HalpernPucella2002} introduce a propositional probabilistic logic suitable for reasoning about upper probabilities. A Halpern-Pucella model is, using our notation, a tuple $\langle W, S, \Pmc, E \rangle$ where $W$ is a set, $S$ is a Boolean subalgebra of $2^W$ and $E$ is a function from a set of propositional variables to $S$~--~similarly as in our models~--~and $\Pmc$ is a set of probability measures on $S$. The language Halpern and Pucella use is a variant of the non-modal linear inequality language of \cite{FaginHalpernMegiddo1990} with a primitive upper probability operator. Halpern and Pucella provide a sound and complete axiomatisation of their logic and show that the corresponding satisfiability problem is $\np$-complete. Marchioni \cite{Marchioni2008} introduces a propositional many-valued logic for reasoning about upper probabilities in semantic structures derived from Halpern-Pucella models.\footnote{Instead of a set of probability measures $\Pmc$, Marchioni uses an upper probability measure $\pi : S \to [0,1]$, that is, a function that satisfies certain conditions that allow it to represent an upper probability given by a set of probability measures. The reader is referred to \cite{HalpernPucella2002,Marchioni2008} for details.} The logic is based on Rational {\L}ukasiewicz logic \cite{Gerla2001}  and it does not contain modal operators. Marchioni provides a sound and (finitely) strongly complete axiomatisation of his logic, and he shows that the corresponding satisfiability problem is $\np$-complete.

Our framework can be seen as a generalisation of both \cite{HalpernPucella2002} and \cite{Marchioni2008}. Halpern-Pucella models are a~$\mathsf{KD45}$-like version of our $\mathsf{K}$-like probabilistic models. Instead of having a fixed set of ``globally'' accessible states giving rise to $\Pmc$, we introduce an accessibility relation which expresses the idea that each state has its own set of ``locally'' accessible states. This difference is well reflected in the comparison of the computational complexity: as shown below, deduction problems in our framework are $\pspace$-complete as opposed to $\conp$-complete in the case of \cite{HalpernPucella2002} and \cite{Marchioni2008}. Another difference is in the language we are using. Instead of a primitive upper/lower probability operator, we use a combination $\Diamond \Prob / \Box \Prob$ of a modal operator and the probability operator $\Prob$.\footnote{In addition, our syntax and semantics are both multi-modal, allowing for modelling multiple agents or actions.} The same syntactic approach to representing lower probabilities is taken by Corsi et al.~\cite{CorsiFlaminioGodoHosni2023}. Their semantics is a special case of our probabilistic models with only one equivalence accessibility relation (mono-modal $\mathsf{S5}$-style probabilistic models). On the other hand, Corsi et al.~\cite{CorsiFlaminioGodoHosni2023} work with a richer language that allows for the inclusion of $\Box$ within the scope of $\Prob$. They demonstrate that their modal probabilistic framework can formalise reasoning about probability, belief functions, and lower probabilities within a unified framework. The main results of \cite{CorsiFlaminioGodoHosni2023} are related to complete axiomatisations, and complexity is not studied.
\end{remark}

% Implication and nonincreasing / nondecreasing probability after prerforming an action.
We can also use {\L}ukasiewicz implication to express how actions modify the probability of certain events. For example, $\Prob (\alpha) \to \Box_a \Prob (\alpha)$ says that action $a$ \emph{cannot decrease} the probability of $\alpha$ \emph{much}~--- the truth degree in $w$ is $1$ if $\mu_w(E(\alpha)) \leq \inf \{ \mu_{w'}(E(\alpha)) \mid w' \in R_a(w)\}$, that is, the probability of $\alpha$ cannot decrease at all, and it is $1$ minus the difference otherwise. Similarly, $\Diamond_a \Prob (\alpha) \to \Prob (\alpha)$ says that $a$ cannot \emph{increase} the probability of $\alpha$ much. We can write $\mathsf{NoDec}_a(\alpha)$ instead of $\Prob (\alpha) \to \Box_a \Prob (\alpha)$ and, as before, use formulas of the form $\mathsf{NoDec}_a(\alpha) \to \mathsf{NoDec}_b(\alpha)$ to express comparisons between actions $a$ and $b$. Similarly for $\Diamond_a \Prob (\alpha) \to \Prob (\alpha)$ written as $\mathsf{NoInc}_a(\alpha)$. We reiterate that these comparisons are not precise statements (true or false) but rather imprecise~--~that is, true to a~degree. 

% Constants and thresholds. Delta.
It is well known that in the presence of the constant $\onehalf$ and the ``arithmetic'' operators $\bullet$ and $\rightarrow_\Pi$, one can define constants $\qmbf$ for all rational numbers $q \in [0,1] \cap \Qmbb$ in {\L}ukasiewicz logic. In combination with {\L}ukasiewicz $\to$ and $\tot$, this allows us to express comparisons with rational thresholds. For example, the formula $\mathbf{0.7} \to \Diamond_a \Prob (\alpha)$ read in terms of actions says that the supremum (``best case'') of the possible probabilities of $\alpha$ resulting from action $a$ is not much lower than $0.7$. The definable operator $\triangle$ turns imprecise statements into precise ones. For instance, $\triangle (\mathbf{0.7} \to \Prob (\alpha))$ has truth degree $1$ if probability of $\alpha$ is at least $0.7$ and has truth degree $0$ otherwise. This means that we may define a~restricted version of the ``Markovian'' modalities of \cite{Aumann1999a,HeifetzMongin2001,KozenEtAl2013} etc.: $L_\qmbf(\alpha) \coloneqq \triangle (\qmbf \to \Prob (\alpha))$ is true (has truth degree $1$) in states where the probability of $\alpha$ is at least $q \in [0,1] \cap \Qmbb$ and false (has truth degree $0$) otherwise.\footnote{This is only a~restricted version since $L_\qmbf$ cannot be nested in our language.} In addition, if $a$ is interpreted epistemically, then $\Box_a L_\qmbf(\alpha)$ expresses a~crisp statement about epistemic attitudes towards probabilities~--- namely, ``Agent $a$ believes that the probability of $\alpha$ is at least $q$''.

We observe briefly that just as in the propositional case, constants do not change the expressivity of the language. The following statement generalises~\cite[Lemmas~3.3.11 and~3.3.13]{Hajek1998}.
\begin{restatable}{proposition}{rationalconstantselimination}\label{prop:rationalconstantselimination}
Let $\Gamma\subseteq\LProb$ be finite and~$n$ be the least common denominator of the constants in~$\Gamma$. Then there is $\Gamma'$ without constants s.t.\ $\Gamma$ is satisfiable iff $\Gamma'$ is and $\Gamma'$ is of polynomial size w.r.t.\ $\max(\lmc[\Gamma],n)$.
\end{restatable}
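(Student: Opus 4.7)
The plan is to introduce a~single fresh propositional variable $r\notin\Prop[\Gamma]$ to play the role of a~``$1/n$-valued'' probabilistic atom, together with a~translation $\tau$ replacing every occurrence of a~rational constant $\qmbf=k/n$ in $\Gamma$ by the {\L}ukasiewicz term $k\cdot\Prob(r)\coloneqq\underbrace{\Prob(r)\oplus\cdots\oplus\Prob(r)}_{k}$ (and by $\bot$ when $k=0$). Once $\Prob(r)$ has been pinned to $1/n$ at the worlds where these replacements get evaluated, $\Imc_\Mfrak(k\cdot\Prob(r),w)=\min(1,k/n)=k/n$, so $\tau$ is truth-preserving.

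The first auxiliary device is a~short constant-free ``pinning'' formula
\[\mathsf{const}(r)\coloneqq\triangle\bigl(\Prob(r)\tot\neg((n-1)\cdot\Prob(r))\bigr).\]
Writing $x=\Imc_\Mfrak(\Prob(r),w)$, we have $\Imc_\Mfrak(\mathsf{const}(r),w)=1$ iff $x=1-\min(1,(n-1)x)$, and a~case split on whether $(n-1)x\leq 1$ shows that the only $x\in[0,1]$ solving this is $x=1/n$. The formula has length $O(n)$ and uses only symbols definable without any rational constant (since $\triangle$, $\tot$, $\bot$, and $\oplus$ are all so definable).

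The second device controls \emph{where} the pinning needs to hold. For every occurrence of a~constant inside some $\phi\in\Gamma$, let $\vec{a}=\langle a_1,\ldots,a_j\rangle$ be its \emph{modal context}~--- the sequence of $\Box_{a_i}$'s encountered on the path from the root of $\phi$ down to this occurrence (with $\Diamond_a$ contributing $a$ via its definition). Collect $\Box_{a_1}\cdots\Box_{a_j}\mathsf{const}(r)$ for each such context in a~set $\Delta$, and put $\Gamma'\coloneqq\tau(\Gamma)\cup\Delta$. Since there are at most $\lmc[\Gamma]$ constant occurrences and each $|\vec{a}|\leq\dmc[\Gamma]\leq\lmc[\Gamma]$, we obtain $|\Delta|=O(\lmc[\Gamma](\lmc[\Gamma]+n))$ and $|\tau(\Gamma)|=O(\lmc[\Gamma]\cdot n)$, giving the required polynomial bound in $\max(\lmc[\Gamma],n)$.

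For equisatisfiability: if $\Mfrak\models\Gamma$ at $w_0$, use Theorem~\ref{theorem:semanticsequivalence} to pass to an SI-model, enlarge its Boolean algebra by a~fresh generator $E(r)$ independent of the existing events, and set $\mu_w(E(r))=1/n$ for every $w$; then $\mathsf{const}(r)$ holds at every world, $\Delta$ is satisfied, and $\tau(\phi)$ and $\phi$ evaluate identically at $w_0$. Conversely, if $\Mfrak'\models\Gamma'$ at $w_0$, the constraints in $\Delta$ force $\Imc_{\Mfrak'}(\Prob(r),w')=1/n$ precisely at the worlds $w'$ at which some constant of $\Gamma$ gets evaluated, whence $\Imc_{\Mfrak'}(\tau(\phi),w_0)=\Imc_{\Mfrak'}(\phi,w_0)$ for each $\phi\in\Gamma$. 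The main obstacle is precisely the size bound: propagating $\mathsf{const}(r)$ naively along \emph{every} modal path up to depth $\dmc[\Gamma]$ branches over agents and is exponential, so it is essential to add one boxed constraint per \emph{occurrence} of a~constant (equivalently, per distinct modal context) rather than per modal path.
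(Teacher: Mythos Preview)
Your argument is essentially the paper's: pin a fresh probabilistic atom to $1/n$ via the fixpoint equation $(n{-}1)x=1{-}x$, replace each constant by an $\oplus$-sum of that atom, and propagate the pin under boxes. The paper differs only cosmetically---it introduces a separate fresh atom $\Prob(q_m)$ per numerator (so each replacement has constant rather than $O(n)$ size) and writes the pinning formula without~$\triangle$; the latter point is worth noting because your $\triangle$ is redundant (a $\Box$-prefixed formula equals~$1$ iff its body does at every successor), and dropping it is what makes the construction reusable for the additive fragment $\Lmc_\mathsf{add}$ in Theorem~\ref{theorem:KLukProbPSpace}, where $\rightarrow_\Pi$ and hence $\triangle$ are unavailable. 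Your per-occurrence boxing along the actual modal contexts is in fact more careful than the paper's blanket $\Box^i$, which tacitly assumes a mono-modal language.
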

\begin{proof}
As $\Gamma$ is finite, $n$~exists. We can express $\sfrac{1}{n}$ as follows. For fresh $\{q_1,\ldots,q_n,q\}\subseteq\Prop$, we set
\begin{align*}
\mathbf{\sfrac{1}{n}}(q)&\coloneqq(\underbrace{\Prob(q)\oplus\ldots\oplus\Prob(q)}_{n-1\text{ times}})\leftrightarrow\neg\Prob(q)&\mathbf{\sfrac{m}{n}}(q_m)&\coloneqq \Prob(q_m)\leftrightarrow(\underbrace{\Prob(q)\oplus\ldots\oplus\Prob(q)}_{m\text{ times}})
\end{align*}
It is clear that $\Imc(\mathbf{\sfrac{1}{n}}(q),w)=1$ iff $\Imc_\Mfrak(\Prob(q),w)=\sfrac{1}{n}$ and $\Imc_\Mfrak(\mathbf{\sfrac{m}{n}}(q_m),w)=1$ iff $\Imc(\Prob(q_m),w)=\sfrac{m}{n}$ (and $v(q)=\sfrac{1}{n}$). Now, given $\phi\in\LProb$, we define $\phi^\flat$ as follows:
\begin{align*}
\Prob(\alpha)^\flat&=\Prob(\alpha)&\mathbf{\sfrac{m}{n}}^\flat&=\Prob(q_m)&
(\heartsuit\phi)^\flat&=\heartsuit\phi^\flat&(\phi\circ\chi)^\flat&=\phi^\flat\circ\chi^\flat\tag{$\heartsuit\in\{\neg,\triangle,\Box\}$, $\circ\in\{\rightarrow,\rightarrow_\Pi,\bullet\}$}
\end{align*}
Now let $\dmc[\Gamma]=\max\{\dmc(\phi)\mid\phi\in\Gamma\}$ and define
\begin{align*}
\Gamma^\flat&=\{\phi^\flat\mid\phi\in\Gamma\}\cup\{\Box^i\mathbf{\sfrac{1}{n}}(q)\mid i\in\{0,\ldots,\dmc[\Gamma]\}\}\cup\bigcup\limits^{\dmc[\Gamma]}_{i=0}\{\Box^i\mathbf{\sfrac{m}{n}}(q_m)\mid\mathbf{\sfrac{m}{n}}\text{ occurs in~}\Gamma\}
\end{align*}
It is clear that $\Gamma^\flat$ is only polynomially larger than $\Gamma$ when $\lmc(\Gamma)\geq n$.

Now let $\Mfrak$ be a~$\KLukProdProb$-model and $w\in\Mfrak$ s.t.\ $\Imc_\Mfrak(\phi,w)=1$ for each $\phi\in\Gamma$. By Proposition~\ref{prop:simplification}, we may w.l.o.g.\ assume that $\Mfrak=\langle W,R,2^{2^{\Prop[\Gamma]}},\mu,E^\Cmsf\rangle$ is an SI-model over $2^{2^{\Prop[\Gamma]}}$. We define an SI-model $\Mfrak^\flat\!=\!\langle W,R,2^{2^{\Prop[\Gamma^\flat]}},\mu^\flat,{E^\flat}^\Cmsf\rangle$ s.t.\ $\mu^\flat_w({E^\flat}^\Cmsf(\alpha))\!=\!\mu_w(E^\Cmsf(\alpha))$ for each $\alpha$ s.t.\ $\Prop(\alpha)\!\cap\!\{q_1,\ldots,q_n,q\}\!=\!\varnothing$, $\mu^\flat_w(E(q_m))=\frac{m}{n}$ for every $q_m$, and $\mu^\flat_w(E(q))=\frac{1}{n}$. It is now clear that $\Imc_{\Mfrak^\flat}(\Prop(q_m),w)=\frac{m}{n}$ in every $w\in\Mfrak$. Thus, as values of formulas depend only on states not further than their modal depth we have $\Imc_\Mfrak(\phi,w)=\Imc_{\Mfrak^\flat}(\phi^\flat,w)$ for every $\phi\in\Gamma$. The result follows.
\end{proof}

% Comment on multiplication and division (cite Hájek on conditional probability).
In \cite{HajekEtAl2000}, $\to_\Pi$ is used to express conditional probabilities $\Prob (\alpha \mid \beta) \coloneqq \Prob (\beta) \to_\Pi \Prob (\alpha \land \beta)$. It follows from this definition that $\Prob (\alpha \mid \beta)$ has truth degree $1$ if the probability of $\beta$ is $0$. %We refer the reader to \cite{HajekEtAl2000} for more details on this. 
A many-valued probabilistic logic with primitive formulas expressing conditional probabilities was proposed in \cite{GodoMarchioni2006}. Flaminio~\cite{Flaminio2007} shows that reasoning about conditional probabilities can be carried out in a~probabilistic logic based on an extension of Rational {\L}ukasiewicz logic (R\L{}) with the $\triangle$ operator, where the satisfiability problem is $\np$-complete. Our reason to base our logic on $\Luk\bm{\Pi}\onehalf$ instead of R\L{} with $\triangle$ is that the former can express additional statements about probability, such as independence statements. %% The jump from $\np$ to $\pspace$ when moving from $\bm{R}\Luk_\triangle$ to $\Luk\bm{\Pi}\onehalf$ is immaterial since we are dealing with general modal extensions.

% Expressivity related to examples in Section 2.
We now turn to linking the expressive resources of our language to the examples discussed in Section~\ref{sec:frames}. Example~\ref{exam:robot1} can be formalised using $L, I, S \in \Prop$, expressing ``light on'', ``item retrieved'' and ``start state'', respectively. It is clear how the event function $E$ should be defined on the frame depicted in Figure~\ref{fig:robot-example}~--- for instance, $E(\neg S \land I) = \{ e_{L \land I}, e_{\neg L \land I} \}$. The formula $\Prob (\neg S \land I)$ expresses the probability of the event $E(\neg S \land I)$, ``the robot will retrieve the correct item'', and the formula $\Box_a \Prob (\neg S \land I)$ expresses the probability of this event after the execution of action $a$, that is, after switching the light switch. In each start state, the action $a$ has exactly one output state --- $s_L$ is switched to $s_{\neg L}$ and vice versa. The formula $\Prob (\neg S \land I) \to \Box_a \Prob (\neg S \land I)$ says that switching the light switch cannot decrease the probability of the robot retrieving the correct item \emph{much}. In state $s_L$, the truth degree of this formula is $0.4$ since the truth degree of $\Prob (\neg S \land I)$ in $s_L$ is $0.8$ and in $s_{\neg L}$ it is $0.2$. On the other hand, the truth degree of $\Prob (\neg S \land I) \to \Box_a \Prob (\neg S \land I)$ in $s_{\neg L}$ is $1$. The formula $\Prob (\neg S \land I) \tot \Box_a \Prob (\neg S \land I)$, saying that the probability of the robot retrieving the correct item is not \emph{changed much} by $a$, has truth degree $0.4$ in both start states $s_L$ and $s_{\neg L}$.

Example~\ref{exam:MarkovChain} shows that finite discrete-time Markov chains can be seen as probabilistic frames. Take a~Markov chain $\Mmbf=\langle N,\pfrak,\qfrak)$ where $N=\{ 1,\ldots, n \}$, and consider an extension of $\Ffrak_\Mmbf$ with an event function $E$ such that $E(\alpha_i) = \{ i \}$ for $i \in N_0$ and a~tuple of Boolean formulas $\alpha_i$. The probability of a~path $P(i_0, \ldots, i_m) = \qfrak(i_0) \cdot \prod_{j = 1}^m\pfrak(i_{j-1}, i_j)$ corresponds to the truth degree of the formula
\[ \Prob(i_0, \ldots, i_m) \coloneqq \Prob (\alpha_{i_0}) \bullet \Diamond_{i_1} (\Prob (\alpha_{i_1}) \bullet \Diamond_{i_2} ( \ldots (\Diamond_{i_{m-1}} \Prob (\alpha_{i_m}) ) \ldots ))\] 
in the state $0$.\footnote{Note that, for all $m>0$, the power set of the set of all paths of length $m$ forms a finite Boolean algebra. This justifies the notation $P(i_0, \ldots, i_m)$.} For instance, the probability of $P(1, 2, 3)$ is the truth degree of $\Prob(\alpha_1) \bullet \Diamond_1 (\Prob(\alpha_2) \bullet \Diamond_2 \Prob(\alpha_3))$. As before, we can use the resources of our language to express comparisons between probabilities of various paths, e.g.,~$\Prob(1, 2, 3) \to \Prob(1, 3, 2)$ or $\triangle(\Prob(1, 2, 3) \to \Prob(1, 3, 2))$. 

% Finitely branching frames!
In practical applications, one very often encounters \emph{finitely branching} frames, that is, probabilistic frames where $R_a(w)$ is finite for all $a \in \Amsf$ and all $w \in W$. For example, take frames where $R_a$ is a partial function for all $a \in \Amsf$, representing deterministic actions.\footnote{Finitely branching Kripke frames are used in the semantics of Strict Deterministic Propositional Dynamic Logic \cite{HalpernReif1983}, for example. Finitely branching probabilistic frames provide semantics for a fragment of a probabilistic version of this logic.} Another obvious example are finite frames where worlds are individuated by values of a finite set of discrete random variables. We write $\models^\fb_\Prob$ and $\models^\fb$ to denote the entailment (satisfaction) relations over the class of finitely-branching frames in $\KLukProdProb$ and $\KLukProd$, respectively. %We write $\Gamma \models^\fb_\Prob \chi$ if $\Imc_\Mfrak(\chi, w) = 1$ for each $\KLukProdProb$-model $\Mfrak$ that is based on a~finitely-branching frame and $w$ in $\Mfrak$ such that $\Imc_\Mfrak(\phi, w) = 1$ for all $\phi \in \Gamma$.

\section{Decidability and complexity}\label{sec:results}
% The main results and proof outlines; details delegated to the appendix.
Let us now tackle the complexity of reasoning in $\KLukProdProb$. We will consider two fundamental tasks. First, \emph{model-checking} --- given a~$\KLukProdProb$-model $\Mfrak$, a~state $w\in\Mfrak$, and a~formula $\phi$, determine the truth degree of $\Imc_\Mfrak(\phi,w)$. The second task is \emph{finitary entailment} over finitely branching models --- given a~finite $\Gamma\cup\{\chi\}$, decide whether $\Gamma\models^\fb_\Prob\chi$.

As expected, model-checking can be done in polynomial time.
\begin{restatable}{theorem}{modelchecking}\label{theorem:modelchecking}
Given a~finite model $\Mfrak=\langle W,R,S,\mu,E\rangle$, $w\in W$, and $\phi\in\LProb$, it takes polynomial time w.r.t.\ $|W|+|S|+\lmc(\phi)$ to determine the truth degree of~$\Imc_\Mfrak(\phi,w)$.
\end{restatable}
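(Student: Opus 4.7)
The plan is the standard bottom-up labelling algorithm for modal model-checking, adapted to the quantitative setting. I would compute a table $T$ storing $\Imc_\Mfrak(\psi,v)$ for every subformula $\psi$ of $\phi$ and every $v\in W$, processing subformulas in order of increasing complexity. The table has at most $\lmc(\phi)\cdot|W|$ entries, and since $\phi$ has at most $\lmc(\phi)$ subformulas, the cost of the algorithm is dominated by the per-entry work.

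For the base cases, $\Imc_\Mfrak(\onehalf,v)=\tfrac{1}{2}$ is immediate, and for a probabilistic atom $\Prob(\alpha)$ I would first evaluate the Boolean term $\alpha$ in the algebra $S$ by another bottom-up pass: $E(p)$ is given, $E({\sim}\beta)=W\setminus E(\beta)$, and $E(\beta\wedge\gamma)=E(\beta)\cap E(\gamma)$. Each element of $S$ is a subset of $W$ and each Boolean operation takes $O(|W|)$ time, so $E(\alpha)$ is computed in $O(\lmc(\alpha)\cdot|W|)$ time; the value $\mu_v(E(\alpha))$ is then a lookup in the table for $\mu_v$ of size $|S|$. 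For the inductive step on propositional connectives I apply the defining arithmetic identities from Definition~\ref{def:KLukProbmodels} directly on the previously computed entries (a constant number of rational operations per $(\psi,v)$). For $\Box_a\psi$, since $W$ is finite, the infimum over $R_a(v)$ reduces to a minimum of at most $|W|$ already-computed values, giving $O(|W|)$ work per state.

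The only delicate point is the bit-complexity of rational arithmetic, since $\bullet$ and $\rightarrow_\Pi$ multiply and divide numerators and denominators. However, the height of the subformula DAG is at most $\lmc(\phi)$, so the bit-size of each stored rational grows only by an additive $O(b)$ factor per level, where $b$ is the maximal bit-size of a rational occurring in $\mu$; the final representations therefore have bit-size $O(\lmc(\phi)\cdot(b+\log|W|))$, which is polynomial in the input. Summing up, the total running time is bounded by a polynomial in $|W|+|S|+\lmc(\phi)$, which yields the claim. I do not foresee a genuine obstacle: the main thing to be careful about is separating the two bottom-up passes (Boolean-term evaluation in $S$ for probabilistic atoms, and truth-degree evaluation for $\LProb$-subformulas) and confirming that rational arithmetic does not blow up super-polynomially.
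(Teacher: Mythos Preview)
Your proposal is correct and follows essentially the same bottom-up inductive strategy as the paper's proof: evaluate the Boolean term $\alpha$ in $S$ for probabilistic atoms, then compute truth degrees of subformulas via the semantic clauses, with the $\Box_a$ case reducing to a minimum over successors. You are actually more careful than the paper in two respects: you make the global labelling table explicit (avoiding any worry about recomputation in the $\Box$ case), and you address the bit-complexity of the rational arithmetic induced by $\bullet$ and $\rightarrow_\Pi$, which the paper's proof simply passes over.
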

\begin{proof}
We proceed by induction on $\phi$. For a~basis case of $\phi=\Prob(\alpha)$, we first determine $\|\alpha\|$ which takes polynomial time w.r.t.\ $\lmc(\phi)+|S|$. Then, using the definition of $\mu_w$, we calculate $\mu_w(\|\alpha\|)$ which also takes polynomial time. The cases of propositional connectives can be tackled by straightforward application of the induction hypothesis. Finally, to determine $v(\Box\chi,w)$, we calculate $v(\chi,w')$ in every $w'\in R(w)$, which takes polynomial time by the induction hypothesis. Then, we take the minimal among those, which gives us the result.
\end{proof}

% The finitary entailment, on the other hand, can be shown to be $\pspace$-complete. First, the lower bound can be obtained already for $\KLukProd$. % --- the $\{\onehalf,\bullet,\rightarrow_\Pi\}$-free fragment of $\KLukProd$.
% \begin{restatable}{proposition}{KLukPSpacehard}  
% \label{prop:KLukPSpacehard}
% Finitary entailment over finitely-branching frames in $\KLukProd$ is $\pspace$-hard.
% \end{restatable}
% % \begin{proof}
% % Let $\phi$ be \emph{monomodal} and define its \emph{modal depth $\dmc(\phi)$} as follows:

% % Let further, $\Box^n\phi\coloneqq\underbrace{\Box\ldots\Box}_{n\text{ times}}\phi$. We show that $\Kmbf\models\chi$ iff $\bigodot\limits_{p\in\Prop(\phi)}\bigodot\limits^{\dmc(\chi)}_{i=0}\Box^i(\Prob(p)\vee\neg\Prob(p))\models_{\KLuk}\chi$.

% % Indeed, given $w\in\Mfrak$, the premise forces all probabilistic atoms to have \emph{classical values} in all states in $\bigcup\limits^{\dmc(\chi)}_{i=0}R^i(w)$ since $\Imc_\Mfrak(\Prob(p)\vee\neg\Prob(p),w)=1$ iff $\Imc_\Mfrak(\Prob(p),w)\in\{0,1\}$. As the values of formulas depend only on the states that are not further than their modal depths, the result follows.
% % \end{proof}

Let us now tackle the complexity of $\KLukProdProb$-entailment over finitely-branching frames. We begin with the presentation of tableaux for $\KLukProd_\fb$ --- the logic $\KLukProd$ over finitely-branching frames.
\begin{definition}[Constraints]\label{def:tableauxlanguage}
We fix a~countable set $\Wmsf=\{w,w',w_0,w_1,\ldots\}$ of \emph{state labels} and define:
\begin{itemize}[noitemsep,topsep=2pt]
\item a~\emph{labelled formula} to be $w:\phi$ with $w\in\Wmsf$ and $\phi\in\LProb$;
\item a~\emph{formulaic constraint} as $w:\phi\triangledown i$ s.t.\ $w:\phi$ is a~labelled formula, $\triangledown\in\{\leq,\geq\}$ and $i$~a~polynomial;
\item a~\emph{numerical constraint} as $w:i\triangleleft j$ with $w\in\Wmsf$, $\triangleleft\in\{\leq,<\}$, and~$i$ and~$j$ polynomials.
\end{itemize}
\end{definition}
\begin{definition}[$\TKLukProd$ --- constraint tableaux for $\KLukProd_\fb$]\label{def:KLuktableaux}
A~\emph{constraint tableau} is a~downward branching tree of \emph{formulaic constraints}, \emph{numerical constraints}, and \emph{relational terms} of the form $wR_aw'$ ($w,w'\in\Wmsf$). A~branch can be extended by an application of a~rule from Fig.~\ref{fig:tableauxrules}. Given a~tableau branch $\Bmc=\Fmc\cup\Nmc$, we let $\Wmsf_\Bmc=\{w\mid w\in\Wmsf,\text{ and }w\text{ occurs in }\Bmc\}$ and set
\begin{align*}
\Fmc&=\{w_1:\phi_1\triangledown i_1,\ldots,w_m:\phi_m\triangledown i_m\}&\Nmc&=\{w'_1:i'_1\triangledown j'_1,\ldots,w'_n:i'_n\triangledown j'_n,w''_1:\onehalf\triangledown j_1,\ldots,w''_r:\onehalf\triangledown j_r\}
\end{align*}
% $\Bmc=\underbrace{\{w_1:\phi_1\triangledown i_1,\ldots,w_m:\phi_m\triangledown i_m\}}_{\Fmc}\cup\underbrace{\{i'_1\triangledown j'_1,\ldots,i'_n\triangledown j'_n,\onehalf\triangledown j_1,\ldots,\onehalf\triangledown j_r\}}_{\Nmc}$,
We define $\Fmc^\Imbb_w=\{x_{w:\phi}\triangledown i\mid w:\phi\triangledown i\in\Fmc\}$ and $\Nmc^\Imbb_w=\{x\triangledown y\mid w:(x\triangledown y)\in\Nmc\}$ for every $w\in\Wmsf_\Bmc$ and set $\Bmc^\Imbb_w=\Fmc^\Imbb_w\cup\Nmc^\Imbb_w$. Furthermore, we set $\Fmc^\Imbb=\bigcup_{w\in\Bmc_\Wmsf}\Fmc^\Imbb_w$, $\Nmc^\Imbb=\bigcup_{w\in\Bmc_\Wmsf}\Nmc^\Imbb_w$, and $\Bmc^\Imbb=\bigcup_{w\in\Bmc_\Wmsf}\Bmc^\Imbb_w$.
% its corresponding system of inequalities $\Bmc^\Imbb=\Fmc^\Imbb\cup\Nmc^\Imbb$: $\Fmc^\Imbb=\{x_{w:\phi}\triangledown i\mid w:\phi\triangledown i\in\Fmc\}$ and $\Nmc^\Imbb=\{x\triangledown y\mid w:(x\triangledown y)\in\Nmc\}$, and let further $\Bmc^\Imbb_w=\Fmc^\Imbb_w\cup\Nmc^\Imbb_w$ be defined in the following way: .

A~tableau branch $\Bmc$ is called \emph{open} if its corresponding system of inequalities $\Bmc^\Imbb$ has a~solution over $[0,1]$ and is \emph{closed} otherwise. An open branch is \emph{complete} if its conclusion is also present on the branch for every premise of a~rule on the branch. A~tableau is closed when all its branches are closed.
\end{definition}
\begin{figure}
\centering
\begin{align*}
\neg_\leq:\dfrac{w:\neg\phi\leq i}{w:\phi\geq1-i}&&
\rightarrow_\leq:\dfrac{w:\phi\rightarrow\chi\leq i}{i\geq1\left|\begin{matrix}w:\phi\geq1-i+j\\w:\chi\leq j\\w:j\leq i\end{matrix}\right.}&&
% \rightarrow_\leq:\dfrac{w:\phi\rightarrow\chi\leq i}{\begin{matrix}w:\phi\geq1-i+j-y\\y\leq  i\\w:\chi\leq j+y\\j\leq  i\end{matrix}}&&
\Box_\leq:\dfrac{w:\Box_a\phi\leq i}{\begin{matrix}wR_aw'\\w':\phi\leq i\end{matrix}}\\[.25em]
\neg_\geq:\dfrac{w:\neg\phi\geq i}{w:\phi\leq1-i}&&
\rightarrow_\geq:\dfrac{w:\phi\rightarrow\chi\geq i}{\begin{matrix}w:\phi\leq1-i+j\\w:\chi\geq j\end{matrix}}&&
\Box_\geq:\dfrac{\begin{matrix}w:\Box_a\phi\geq i\\wR_au\end{matrix}}{u:\phi\geq i}\\[.25em]
\bullet:\dfrac{w:\phi\bullet\chi\triangledown i}{\begin{matrix}w:\phi=j_1\\w:\chi=j_2\\w:j_1\cdot j_2\triangledown i\end{matrix}}
&&
{\rightarrow_\Pi}_\leq:\dfrac{w:\phi\rightarrow_\Pi\chi\leq i}{i\geq1\left|\begin{matrix}w:\phi=j_1\\w:\chi=j_2\\w:j_2/j_1\leq i\end{matrix}\right.}
&&
{\rightarrow_\Pi}_\geq:\dfrac{w:\phi\rightarrow_\Pi\chi\geq i}{\begin{matrix}w:\phi\geq j\\w:\chi\leq j\end{matrix}\left|\begin{matrix}w:\phi=j_1\\w:\chi=j_2\\w:j_2/j_1\geq i\end{matrix}\right.}
\end{align*}
\caption{Tableaux rules: $j$ is a~new variable; $w'$ is fresh on the branch and $u$ is present on the branch; $w:\psi=j$ is a~shorthand for $\{w:\psi\geq j,w:\psi\leq j\}$.}
\label{fig:tableauxrules}
\end{figure}
\begin{definition}[Realising model]\label{def:realisingmodel}
Let $\Bmc$ be a~complete open branch, $\Bmc^\Imbb$ its corresponding system of inequalities, $\Smc_{\Bmc^\Imbb}$ its solution, and given a~variable $x_{w:\phi}$, let $\Smc_{\Bmc^\Imbb}(x_{w:\phi})$ denote its value in $\Smc_{\Bmc^\Imbb}$. We say that a~\emph{$\KLukProd$-model} $\Mfrak=\langle W,R,v\rangle$ \emph{realises} $\Smc_{\Bmc^\Imbb}$ if $W=\{w\mid w\text{ occurs on }\Bmc\}$, $R_a=\{\langle w,w'\rangle\mid wR_aw'\in\Bmc\}$, and $\Imc_\Mfrak(\phi,w)=\Smc_{\Bmc^\Imbb}(x_{w:\phi})$ for each $\phi$. In particular, if $w:\phi\triangledown i\in\Bmc$ and $\Imc_\Mfrak(\phi,w)=\Smc_{\Bmc^\Imbb}(x_{w:\phi})$, we say that $\Mfrak$ realises $w:\phi\triangledown i$. We say that $\Mfrak$ \emph{realises $\Bmc$} if there is some solution of $\Bmc^\Imbb$ that is realised by~$\Mfrak$.
\end{definition}

Observe from the definition above that to construct a~full countermodel, it suffices to have \emph{one} complete open branch. Let us show the completeness of~$\TKLukProd$. The next theorem can be proved in a~standard manner.
\begin{restatable}{theorem}{TKLukcompleteness}\label{theorem:TKLukcompleteness}
Let $\Gamma\cup\{\chi\}\subseteq\LProb$ be finite. Then $\Gamma\models^\fb\chi$ iff the tableau beginning with $\{w:\phi\geq1\mid\phi\in\Gamma\}\cup\{w:\chi\leq d,d<1\}$ is closed.
\end{restatable}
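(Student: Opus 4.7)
The plan is the customary soundness/completeness argument for constraint tableaux. For soundness -- tableau closed implies $\Gamma\models^\fb\chi$ -- I argue the contrapositive by induction on the construction of the tableau: if the initial configuration is realised by some finitely branching countermodel, every rule application preserves realisability on at least one branch, so some branch of the fully expanded tableau remains open. For completeness -- $\Gamma\models^\fb\chi$ implies that some tableau is closed -- I again use the contrapositive: given a complete open branch $\Bmc$, I fix a solution $\Smc_{\Bmc^\Imbb}$ of the associated polynomial system $\Bmc^\Imbb$ over $[0,1]$ and build a finitely branching $\KLukProd$-model $\Mfrak=\langle W,R,v\rangle$ realising $\Bmc$, which then refutes $\Gamma\models^\fb\chi$ at the initial label.

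For the soundness direction, the propositional rules for $\neg$ and $\rightarrow$ are direct syntactic rewrites of the corresponding {\L}ukasiewicz truth clauses. The $\Box_\geq$ rule only transfers a lower bound to an already-present successor and so preserves any realising model. The $\Box_\leq$ rule introduces a single fresh successor label $w'$; because the countermodel is finitely branching, the infimum defining $\Imc_\Mfrak(\Box_a\phi,w)$ is attained at some $u\in R_a(w)$, which supplies the denotation of $w'$. The rules for $\bullet$, ${\rightarrow_\Pi}_\leq$ and ${\rightarrow_\Pi}_\geq$ introduce auxiliary numerical variables $j_1,j_2$ set to the current values of the relevant subformulas together with a polynomial constraint encoding the product or quotient; the branching in ${\rightarrow_\Pi}$ and $\rightarrow$ rules reflects the semantic case split (notably the trivial-truth case $i\geq 1$ for implications, and the antecedent-smaller-than-consequent split for $\rightarrow_\Pi$), and in each case the candidate countermodel satisfies at least one conclusion.

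For the completeness direction, set $W=\Wmsf_\Bmc$, $R_a=\{\langle w,w'\rangle\mid wR_aw'\in\Bmc\}$, and $v(p,w)=\Smc_{\Bmc^\Imbb}(x_{w:\Prob(p)})$ whenever the latter variable occurs in $\Bmc$ (and arbitrary otherwise). Because each application of $\Box_\leq$ introduces at most one fresh successor and the branch is finite, every $R_a(w)$ is finite, so $\Mfrak$ is finitely branching. I then show by induction on $\phi$ that $\Imc_\Mfrak(\phi,w)=\Smc_{\Bmc^\Imbb}(x_{w:\phi})$ whenever $w:\phi$ appears on $\Bmc$. Completeness of $\Bmc$ ensures that for every formulaic constraint on the branch the conclusions of the applicable rule are also present; hence the polynomial constraints that tie the value of $\phi$ to the values of its subformulas are all contained in $\Bmc^\Imbb$ and are therefore satisfied by $\Smc_{\Bmc^\Imbb}$. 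The initial constraints then yield $\Imc_\Mfrak(\phi,w)=1$ for all $\phi\in\Gamma$ and $\Imc_\Mfrak(\chi,w)<1$ at the initial label.

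The main obstacle I expect is twofold. First, one must establish that complete open branches exist, i.e.\ termination of exhaustive expansion. This should follow from the observation that every $\Box$-rule strictly decreases the modal depth of formulaic constraints attached to freshly introduced labels, while the remaining rules either decrease the formula length of existing constraints or add only arithmetic inequalities with no new propositional structure; a suitable lexicographic measure on branches then bounds their length in terms of $\dmc[\Gamma\cup\{\chi\}]$ and $\lmc[\Gamma\cup\{\chi\}]$. Second, because the $\bullet$ and $\rightarrow_\Pi$ rules introduce genuine polynomial (nonlinear) constraints, the system $\Bmc^\Imbb$ cannot be handled by linear-programming reasoning as in the pure {\L}ukasiewicz case; instead I rely on the fact that $\Bmc^\Imbb$ is satisfiable over $[0,1]$ precisely because the branch is open by definition, and take any such solution to drive the model construction.
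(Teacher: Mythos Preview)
Your proposal is correct and follows essentially the same approach as the paper: soundness via preservation of realisability under each rule (with the $\Box_\leq$ case exploiting finite branching so the infimum is attained), and completeness via the canonical model extracted from a complete open branch with $W=\Wmsf_\Bmc$, $R_a$ read off the relational terms, and valuations given by the chosen solution $\Smc_{\Bmc^\Imbb}$. You are in fact more explicit than the paper about termination and about why the resulting model is finitely branching, but these are elaborations of the same argument rather than a different route.
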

\begin{proof}
For the sake of simplicity, we assume that our language is mono-modal. The proof can be carried out in the same manner when $|\Amsf|\geq2$.

For the soundness part, we observe first that closed branches do not have realising models. Now, it suffices to show that if $\Mfrak$ realises the premises of a~rule, it also realises the conclusion. For propositional rules, this follows straightforwardly from Definition~\ref{def:KLukProbmodels}. Let us now consider modal rules. Assume that $\Mfrak$ realises $w:\Box\phi\leq i$. Then $\Imc(\Box\phi,w)\leq i$ and there must be some $w'\in R(w)$ s.t.\ $\Imc(\phi,w')\leq i$. Hence, the conclusion is also realised. The rule $\Box_\geq$ can be dealt with similarly.

For the completeness part, we show that every complete open branch $\Bmc$ has a~realising model. Let $\Smc_{\Bmc^\Imbb}$ be a~solution of the system of inequalities $\Bmc^\Imbb$ that corresponds to~$\Bmc$. We construct the model $\Mfrak=\langle W,R,v\rangle$ as follows: $W=\{w\mid w\text{ occurs on }\Bmc\}$, $R=\{\langle w,w'\rangle\mid wRw'\in\Bmc\}$, $v(p,w)=\Smc_{\Bmc^\Imbb}(x_{w:p})$. It remains to show that $\Imc_\Mfrak(\chi,w)=\Smc_{\Bmc^\Imbb}(x_{w:\chi})$ for every $\chi$ occurring on~$\Bmc$.

We proceed by induction on~$\chi$. The base case of $\chi\in\Prop$ holds by construction of~$\Mfrak$. The propositional cases can be verified by a~straightforward application of the induction hypothesis. Let us now consider the case of $\chi=\Box\psi$. Assume that $w:\Box\psi\leq i\in\Bmc$. As $\Bmc$ is complete, $\{wRw',w':\psi\leq i\}\subseteq\Bmc$. By the induction hypothesis, $\Mfrak$~realises $w':\psi\leq i$. Thus, there is $w'\in R(w)$ s.t.\ $\Imc_\Mfrak(\psi,w')\leq i$. It follows that $\Imc_\Mfrak(\Box\psi,w)\leq i$, as required. Now assume that $w:\Box\psi\geq i\in\Bmc$. If there is no $u$ s.t.\ $wRu\in\Bmc$, it means that $R(w)=\varnothing$, whence $\Imc_\Mfrak(\Box\psi,w)=1$. Thus, $w:\Box\psi\geq i$ is realised. If there is some $u$ s.t.\ $wRu\in\Bmc$, then for each such $u$, $u:\psi\geq i\in\Bmc$. Again, by the induction hypothesis, all these constraints are realised, whence $\Imc_\Mfrak(\Box\psi,w)\leq i$, as required.
\end{proof}
% \begin{proof}[Proof idea]
% For soundness, we observe that closed branches do not have realising models. Now, it suffices to show that if $\Mfrak$ realises the premises of a~rule, it also realises the conclusion. We consider modal rules. Let $\Mfrak$ realise $w:\Box\phi\leq i$. Then $\Imc(\Box\phi,w)\leq i$ and there must be some $w'\in R(w)$ s.t.\ $\Imc(\phi,w')\leq i$. Hence, the conclusion is also realised. The rule $\Box_\geq$ can be dealt with similarly. For completeness, we show that every complete open branch $\Bmc$ has a~realising model. Let $\Smc_{\Bmc^\Imbb}$ be a~solution of the system of inequalities $\Bmc^\Imbb$ that corresponds to~$\Bmc$. We construct the model $\Mfrak=\langle W,R,v\rangle$ as follows: $W=\{w\mid w\text{ occurs on }\Bmc\}$, $R=\{\langle w,w'\rangle\mid wRw'\in\Bmc\}$, $v(p,w)=\Smc_{\Bmc^\Imbb}(x_{w:p})$. We can now show by induction on $\chi$ that $\Imc_\Mfrak(\chi,w)=\Smc_{\Bmc^\Imbb}(x_{w})$ for every $\chi$ occurring on~$\Bmc$.
% \end{proof}

We are finally ready to establish the computational complexity of $\models^\fb_\Prob$. We adapt the technique from~\cite{HajekTulipani2001} and combine it with a~standard decision algorithm for $\Kmbf$ from~\cite{BlackburndeRijkeVenema2010}.
\begin{restatable}{theorem}{KLulmultProbpspace}\label{theorem:KLulmultProbpspace}
Let $\Gamma\cup\{\chi\}\subseteq\LProb$ be finite. Then it is $\pspace$-complete to decide whether $\Gamma\models^\fb_\Prob\chi$.
\end{restatable}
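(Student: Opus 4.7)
The proof splits into matching lower and upper bounds. The $\pspace$-lower bound comes from a reduction from local consequence in classical multi-modal $\Kmbf$, known to be $\pspace$-hard by Ladner's theorem. A $\Kmbf$-formula $\phi$ embeds into $\Lmodal$ by reading each propositional variable $p$ as the probabilistic atom $\Prob(p)$; a straightforward adaptation of Proposition~\ref{prop:probandfuzzy} to finitely-branching frames (restricting to $\{0,1\}$-valued $\mu_w(E(p))$) yields a polynomial-time reduction from $\Kmbf$-consequence to $\models^\fb_\Prob$.

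The $\pspace$-upper bound combines a Ladner-style traversal of the constraint tableau $\TKLukProd$, extended to the full language $\LProb$, with the per-state method of~\cite{HajekTulipani2001}. The tableau of Fig.~\ref{fig:tableauxrules} is used as is for propositional and modal rules; atoms $\Prob(\alpha_i)$ are retained as first-class variables $x_{w:\Prob(\alpha_i)}$. The Ladner skeleton is the standard $\pspace$ decision procedure for $\Kmbf$~\cite{BlackburndeRijkeVenema2010}: at each recursive call we process the constraints at one world $w$, and for each modal constraint $w:\Box_a\phi\leq i$ we spawn a fresh successor, recurse, verify, and discard it before handling the next. The stack depth is bounded by $\dmc[\Gamma\cup\{\chi\}]$ and the local working memory at each level is polynomial, so the total space is polynomial.

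The nontrivial step is the per-world satisfiability check. At a world $w$ we collect the polynomially many formulaic constraints over the $x_{w:\Prob(\alpha_i)}$ and the polynomial (in)equalities generated by the $\bullet$ and $\rightarrow_\Pi$ rules (with quotients cleared by introducing auxiliary variables). To link these variables to an actual finitely additive measure we appeal to Proposition~\ref{prop:simplification}: it is enough to choose non-negative values $y_Y^w$ for the atoms $Y\subseteq\Prop[\Gamma\cup\{\chi\}]$ satisfying $\sum_Y y_Y^w = 1$ and $x_{w:\Prob(\alpha_i)} = \sum_{Y\models\alpha_i} y_Y^w$. Although there are exponentially many such atoms, a Carathéodory-style argument in the spirit of~\cite{HajekTulipani2001} shows that any solution can be compressed to one in which at most $k+1$ of the $y_Y^w$ are nonzero, where $k$ bounds the number of distinct $\Prob(\alpha_i)$ at $w$. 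Guessing this polynomially-small support reduces the per-world check to satisfiability of a polynomial-size system of polynomial (in)equalities over $[0,1]$, which lies in $\pspace$ via the existential theory of the reals.

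The main obstacle will be that the Carathéodory compression must be performed \emph{independently at each world}: since $\mu_w$ and $\mu_{w'}$ at a successor $w'$ may concentrate on entirely different atoms, there is no globally small support to pre-commit to. The Ladner-style DFS is exactly what keeps total space polynomial, as we never hold more than one branch's guesses simultaneously. A secondary care point is to verify that the cumulative polynomial constraints generated by $\bullet$, $\rightarrow_\Pi$, and the Carathéodory substitution remain of polynomial size per world, so that the appeal to the $\pspace$ algorithm for the existential theory of the reals is legitimate.
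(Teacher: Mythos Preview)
Your upper-bound argument is essentially the paper's own: a Ladner-style depth-first traversal of the constraint tableau, with a per-world check in which the exponentially many measure-atoms are compressed to a polynomial-size support (the paper cites~\cite[Lemma~2.5]{FaginHalpernMegiddo1990}, which is the same Carath\'eodory-type observation you invoke via~\cite{HajekTulipani2001}), followed by an appeal to the $\pspace$ decision procedure for the existential theory of the reals. The care points you flag---independent compression at each world, and polynomial size of the per-world polynomial system---are exactly the ones that matter, and they go through.

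Your lower-bound argument, however, has a genuine gap. The embedding $p\mapsto\Prob(p)$ does \emph{not} reduce classical $\Kmbf$-consequence to $\models^\fb_\Prob$: for instance $\varnothing\models_\Kmbf p\vee\neg p$, but $\varnothing\not\models^\fb_\Prob\Prob(p)\vee\neg\Prob(p)$, since in any model where $\mu_w(E(p))=\tfrac{1}{2}$ the translated formula has value $\tfrac{1}{2}$. Proposition~\ref{prop:probandfuzzy} only tells you that $\models^\fb$ (the many-valued $\KLukProd$ consequence) and $\models^\fb_\Prob$ coincide on $\Lmodal$; it says nothing about agreement between \emph{classical} $\Kmbf$-consequence and many-valued $\KLukProd$-consequence under the naive translation. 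Your parenthetical ``restricting to $\{0,1\}$-valued $\mu_w(E(p))$'' cannot repair this, because $\models^\fb_\Prob$ quantifies over \emph{all} models, not just two-valued ones. The paper closes this gap by translating $p$ as $\triangle\Prob(p)$: since $\triangle$ forces values into $\{0,1\}$ and the \L ukasiewicz connectives behave classically there, one obtains that $\phi$ is $\Kmbf$-valid iff $\phi^\triangle$ is $\KLukProdProb$-valid over finitely branching frames.
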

\begin{proof}
Let us begin with the $\pspace$-hardness. We provide a~reduction from the validity in $\Kmbf$, which is known to be $\pspace$-complete. Namely, let $\phi$ be a~formula over $\{\neg,\rightarrow,\Box\}$ and let further, $\phi^\triangle$~be the result of replacing each variable $p$ occurring in~$\phi$ with $\triangle\Prob(p)$. Now observe that $\triangle$-formulas have values in $\{0,1\}$ and $\KLukProdProb$-connectives behave classically on $\{0,1\}$. As $\Kmbf$ is complete over finitely branching frames, $\phi$ is $\Kmbf$-valid iff $\phi^\triangle$ is $\KLukProdProb$-valid on finitely branching frames.

Let us now provide a~$\pspace$ decision procedure. We assume that $\Xi=\Gamma\cup\{\chi\}$, that there are $r$~different probabilistic atoms in~$\Xi$, and $\Prop[\Xi]=\{p_1,\ldots,p_m\}$. Observe that tableaux for finite sets of formulas terminate since rules have the branching factor of at most~$2$ and decompose formulas. Consider the following set: $\Tmc(\Xi)=\{w^0_1\!:\!\chi\!\leq\!c,c\!<\!1\}\cup\{w^0_1:\phi\geq1\mid\phi\in\Gamma\}$. By~\cite[Theorem~3.3]{Canny1988}, it is in $\pspace$ to determine whether the given branch is closed as $\Bmc^\Imbb_w$'s are systems of \emph{polynomial} inequalities over~$\Rmbb$.

The algorithm works in two stages. First, we build a~$\KLukProd$-tableau in a~depth-first manner. Second, we recall that by Proposition~\ref{prop:simplification}, we can assume that all $\alpha$'s are evaluated on $2^{2^{\Prop(\phi)}}$. At this stage, we check that the values of modal atoms are coherent as a~measure on $S=2^{2^{\Prop[\Xi]}}$, i.e., that there is a~probability measure $\mu$ s.t.\ $\mu_w(E(\alpha))=x_{w:\Prob(\alpha)}$ for every modal atom $\Prob(\alpha)$ and state label~$w$ in~$\Xi$.

We begin by applying propositional rules to $\Tmc(\Xi)$, guessing the branch if needed and checking whether the resulting system of inequalities has solutions. Once there are no formulas in $w^0_1$ with a~propositional principal connective, we pick a~constraint $w^0_1:\Box\psi\leq i$ and apply the $\Box_\leq$ rule to it. This produces a~new state $w^1_{1,1}$\footnote{Here, the upper index denotes the ‘depth’ of the state on the branch; the lower left index stands for the position of the state on the level; the lower right index identifies the position of the predecessor on its level.} and a~relational term $w^0_1Rw^1_{1,1}$. Now we can apply rules $\Box_\geq$ using this new state. In this new state, we apply propositional rules until no formulas with principal propositional connectives remain. We proceed this way until we reach a~state, say $w^n_{1,1}$ generated from $w^{n-1}_{1,1}$, which does not contain modal formulas. Observe that $n\in\Omc(\lmc[\Xi])$ because the length of the branch in the model is bounded by the maximal number of nested modalities in a~formula of $\Xi$.

Once formulas in $w^n_{1,1}$ are decomposed into modal atoms, we have the following system of polynomial inequalities that corresponds to the current branch~$\Bmc$ and state~$w^n_{1,1}$ ($\psi$'s are formulas in~$w^n_{1,1}$):
\begin{align}\label{equ:initialsystemmult}
x_{w^n_{1,1}:\psi_1}\triangledown l'_1&&\ldots&&x_{w^n_{1,1}:\psi_k}\triangledown l'_k&&x_{w^n_{1,1}:\Prob(\alpha_1)}\triangledown l_1&&\ldots&&x_{w^n_{1,1}:\Prob(\alpha_r)}\triangledown l_n&&j_1\triangleleft j'_1&&\ldots&&j_{r'}\triangleleft j'_{r'}\tag{$\Bmc^\Imbb_{w^n_{1,1}}$}
\end{align}
% First, we check that~\eqref{equ:initialsystemmult} has solutions over $[0,1]$. As $r+r'\in\Omc(\lmc[\Xi])$, we need polynomial space w.r.t.\ the size of $\Xi$ to do this. If it does not, $\Bmc$ is closed. If it does,

We now need to check that the values of $x_{w^n_{1,1}:\Prob(\alpha)}$'s are coherent as a~probability measure. To do this, we consider the following system of \emph{linear} inequalities for every $i\in\{1,\ldots,r\}$:
\begin{align}\label{equ:bigsystemmult}
\sum\limits_{e\in\{0,1\}^m}u_e&=1&\sum\limits_{e\in\{0,1\}^m}a_{i,e}\cdot u_e&=x_{w^n_{1,1}:\Prob(\alpha_i)}\tag{$\exp\Bmc^\Imbb_{w^n_{1,1}}$}
\end{align}
Here, the $e$ indices represent the subsets of $\Prop[\Xi]$, i.e., the atoms of $2^{2^{\Prop[\Xi]}}$: $a_{i,e}=1$ when $e\models\alpha_i$ (this takes polynomial time to establish), and the value of $u_e$ is the measure assignment to~$e$. For example, if there are only two variables: $p$ and $q$, then $u_{01}$ corresponds to the measure of $\{q\}$ and $u_{11}$ to the measure of $\{p,q\}$. Thus, \eqref{equ:bigsystemmult} has solutions iff the values of $x_{w^n_{1,1}:\Prob(\alpha_i)}$'s are coherent as a~measure.

Now observe that $u_e$'s do not occur in~\eqref{equ:initialsystemmult} and that~\eqref{equ:bigsystemmult} has only $r+1$ lines. Thus, even though there are $2^{|\Prop[\Xi]|}$ variables $u_e$, we can guess a~list $L$ of $r+1$ words $e$~over $\{0,1\}$ using~\cite[Lemma~2.5]{FaginHalpernMegiddo1990} and let $u_e=0$ for each $e\notin L$. This produces the following system of inequalities.
\begin{align}\label{equ:smallsystemmult}
\sum\limits_{e\in L}u_e&=1&\sum\limits_{e\in L}a_{i,e}\cdot u_e&=x_{w^n_{1,1}:\Prob(\alpha_i)}\tag{$\mathrm{poly}\Bmc^\Imbb_{w^n_{1,1}}$}
\end{align}
Clearly, \eqref{equ:smallsystemmult} is of polynomial size w.r.t.\ $\lmc[\Xi]$ and has solutions iff~\eqref{equ:bigsystemmult} does. We can now treat \eqref{equ:smallsystemmult} and~\eqref{equ:initialsystemmult} as quantifier-free formulas of the real field theory and consider their conjunction. By~\cite[Theorem~3.3]{Canny1988}, it is in $\pspace$ to check whether it is satisfiable over $[0,1]$. If it is \emph{unsatisfiable}, there is no solution to~\eqref{equ:initialsystemmult} that makes the values of $x_{w^n_{1,1}:\Prob(\alpha)}$'s coherent with a~measure and the values of $x_{w^n_{1,1}:\psi}$'s with the semantics of~$\KLukProdProb$. In this case, we close $\Bmc$. Otherwise, we delete~\eqref{equ:initialsystemmult}, mark the constraint $w^{n-1}_{1,1}:\Box\tau\leq j$ that produced $w^n_{1,1}$ as ‘safe’, and delete the term $w^{n-1}_{1,1}Rw^n_{1,1}$. We then pick the next constraint in $w^{n-1}_{1,1}$ that can produce a~new state and repeat the process. Once all constraints of the form $w^{n-1}_{1,1}:\Box\tau'\leq j'$ are marked safe, we guess $\mathrm{poly}\Bmc^\Imbb_{w^{n-1}_{1,1}}$ and consider its conjunction with $\Bmc^\Imbb_{w^{n-1}_{1,1}}$. Note, furthermore, that if all probabilistic atoms occurring in $w^{n-1}_{1,1}$ are in the scope of~$\Box$, it suffices to check that $\Bmc^\Imbb_{w^{n-1}_{1,1}}$ has solutions. We repeat the process until either $w^0_1$ is marked safe (in which case, $\Gamma\not\models^\fb_\Prob\chi$) or all branches of our tableau are closed (whence, $\Gamma\models^\fb_\Prob\chi$).

Finally, recall that the depth of the model is bounded from above by $\dmc[\Xi]+1$; moreover, each state~$w$ contains only $\Bmc^\Imbb_w$ and $\mathrm{poly}\Bmc^\Imbb_w$, whose sizes are polynomial w.r.t.\ $\lmc[\Xi]$. Thus, we need polynomial space to execute the procedure. The result follows.
\end{proof}

We finish the section with two brief observations. First, in Example~\ref{example:multiagentprobabilisticframes}, we considered multi-agent probabilistic frames. One can readily see that our language can be straightforwardly expanded to incorporate such frames by allowing probabilistic atoms of the form $\Prob_a(\alpha)$ with $a\in\Amsf$. The proof of $\pspace$-completeness of the resulting logic can also be adapted expectedly. Second, we were dealing with the complexity of reasoning in $\KLukProdProb$ over \emph{finitely-branching} frames. Note, however, that we do not need the restriction to finitely branching frames if we consider the fragment $\Lmc_\mathsf{add}$ of $\LProb$ that contains all rational constants but \emph{does not} contain $\bullet$ and~$\rightarrow_\Pi$. We use $\KLukAddProb$ to denote the $\Lmc_\mathsf{add}$-fragment of~$\KLukProdProb$.

% Note, however, that if we restrict $\LProb$ to its $\{\bullet,\rightarrow_\Pi\}$-free fragment with all rational constants $\Lmc_\mathsf{add}$, $\KLukProdProb$-entailment in $\Lmc_\mathsf{add}$ becomes $\pspace$-complete \emph{over arbitrary frames}.
\begin{restatable}{theorem}{KLukProbPSpace}\label{theorem:KLukProbPSpace}
Let $\Gamma\cup\{\chi\}\subseteq\Lmc_\mathsf{add}$ be finite. Then, it is $\pspace$-complete to decide whether $\Gamma\models_\Prob\chi$.
\end{restatable}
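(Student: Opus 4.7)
The proof splits into $\pspace$-hardness and $\pspace$-membership over arbitrary frames.

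For $\pspace$-hardness, I reduce from $\Kmbf$-validity, which is $\pspace$-complete. The construction in Theorem~\ref{theorem:KLulmultProbpspace} used $\triangle$, which is not definable in $\Lmc_\mathsf{add}$; I therefore replace the crisping device by the observation that
\[ \Imc_\Mfrak(\neg(\Prob(p)\wedge\neg\Prob(p)),w)=1-\min(\Imc_\Mfrak(\Prob(p),w),\, 1-\Imc_\Mfrak(\Prob(p),w)) \]
equals $1$ iff $\Imc_\Mfrak(\Prob(p),w)\in\{0,1\}$. Given a $\Kmbf$-formula $\phi$ over variables $p_1,\ldots,p_n$ and a fixed modality $\Box_a$, let $\phi^\sharp$ be obtained by substituting $\Prob(p_i)$ for each $p_i$, and set
\[ \Delta=\{\Box_a^k\neg(\Prob(p_i)\wedge\neg\Prob(p_i))\mid 1\leq i\leq n,\ 0\leq k\leq\dmc(\phi)\}. \]
Then $\Delta$ forces each $\Prob(p_i)$ to take only values in $\{0,1\}$ at every world within modal depth $\dmc(\phi)$ of the evaluation point, so the Łukasiewicz connectives (including $\Box_a$) collapse to their Boolean counterparts on such values. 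Hence $\phi$ is $\Kmbf$-valid iff $\Delta\models_\Prob\phi^\sharp$, yielding a polynomial-time reduction.

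For $\pspace$-membership, I adapt the tableau algorithm of Theorem~\ref{theorem:KLulmultProbpspace} by dropping the rules for $\bullet$ and $\rightarrow_\Pi$. All remaining rules produce purely \emph{linear} constraints, so the per-branch consistency check on $\Bmc^\Imbb_w$ together with the measure-coherence system $\mathrm{poly}\Bmc^\Imbb_w$ is a linear programming problem solvable in polynomial time. Depth-first traversal of the tableau then runs in polynomial space, exactly as in the proof of Theorem~\ref{theorem:KLulmultProbpspace}.

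The main obstacle is that the tableau constructs finitely-branching countermodels, whereas the theorem concerns arbitrary frames. I address this by proving that $\Gamma\models_\Prob\chi$ iff $\Gamma\models^\fb_\Prob\chi$ for $\Gamma\cup\{\chi\}\subseteq\Lmc_\mathsf{add}$. The forward direction is immediate. For the converse, given any refuting model $\Mfrak$ over an arbitrary frame, I would construct a finitely-branching refuting model via a filtration-style argument: for each world $w$ within modal depth $\dmc[\Gamma\cup\{\chi\}]$ of the refuting point and each $\Box_a$-subformula $\Box_a\psi$ of $\Gamma\cup\{\chi\}$, select a finite set of witnesses $w'\in R_a(w)$ whose values realise or sufficiently closely approximate the infimum $\Imc_\Mfrak(\Box_a\psi,w)$, and adjust the algebra and measures as needed to preserve the truth values of all relevant subformulas. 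Because $\Lmc_\mathsf{add}$'s semantics are piecewise linear with rational coefficients, only finitely many rational thresholds are relevant and thus finitely many witnesses per node suffice. The resulting finitely-branching model still refutes the entailment, reducing the arbitrary-frame problem to the $\models^\fb_\Prob$ case handled by the adapted tableau.
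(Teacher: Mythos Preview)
Your hardness argument is a genuine addition: the paper's proof sketch simply says to reuse the procedure of Theorem~\ref{theorem:KLulmultProbpspace}, but the $\triangle$-based reduction there is unavailable in $\Lmc_\mathsf{add}$. Your replacement device $\neg(\Prob(p)\wedge\neg\Prob(p))$ together with the $\Box_a^k$-prefixes up to modal depth is correct and polynomial, so on this point your proposal actually patches something the paper's sketch glosses over.

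For membership your high-level plan coincides with the paper's: run the tableau of Theorem~\ref{theorem:KLulmultProbpspace} without the $\bullet$ and $\rightarrow_\Pi$ rules, so that all constraints are linear, and then argue that the finitely-branching countermodels this produces suffice over \emph{all} frames. The difference is in how the last step is justified. The paper first strips constants via Proposition~\ref{prop:rationalconstantselimination} and then invokes an external result (\cite[Lemma~4.6]{Vidal2022}) stating that modal {\L}ukasiewicz logic is complete with respect to \emph{witnessed} models, i.e.\ models in which every $\inf$ in the $\Box$-clause is attained; this is exactly what makes the $\Box_\leq$ rule sound over arbitrary frames and immediately collapses $\models_\Prob$ to $\models^\fb_\Prob$ on $\Lmc_\mathsf{add}$.

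Your filtration sketch aims at the same conclusion but does not get there. The phrase ``realise or sufficiently closely approximate the infimum'' hides the whole difficulty: if the infimum is not attained, replacing it by a nearby witnessed value \emph{changes} $\Imc_\Mfrak(\Box_a\psi,w)$, and piecewise linearity of the term functions does not by itself tell you that such a perturbation can be done while keeping every $\phi\in\Gamma$ at value~$1$ and $\chi$ strictly below~$1$. Establishing that one can always pass to a witnessed (hence finitely-branching) countermodel is precisely the content of Vidal's lemma, and it is not a routine filtration. So either cite that result, as the paper does, or supply an actual proof of witnessed-model completeness for the $\Lmc_\mathsf{add}$-fragment; the handwave about ``finitely many rational thresholds'' is not one.
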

\begin{proof}[Proof sketch]
By Proposition~\ref{prop:rationalconstantselimination}, we can eliminate rational constants from $\Gamma\cup\{\chi\}$ without the use of $\bullet$ and $\rightarrow_\Pi$. Thus, by~\cite[Lem\-ma~4.6]{Vidal2022}, the resulting logic is \emph{complete w.r.t.\ witnessed models}. Hence, the $\TKLukProd$-rules are sound. Now we use the same procedure as in the proof of Theorem~\ref{theorem:KLulmultProbpspace}.% Note, moreover, that since $\bullet$ and $\rightarrow_\Pi$ do not occur anymore, $\Bmc^\Imbb_w$'s are now systems of \emph{linear} inequalities and can be solved in \emph{polynomial time}.
\end{proof}

\section{Conclusion}\label{sec:conclusion}
% Summary and topic for future research.

Building on the work of Hájek et al.~\cite{Hajek1998,HajekEtAl1995,HajekEtAl2000,HajekHarmancova1995}, this paper investigated many-valued logics for reasoning about probability in the presence of modal notions such as knowledge, belief and action. Our main technical result, extending the contributions of \cite{MajerSedlar2025}, is that the problem of deciding local consequence over finitely branching frames is $\pspace$-complete for a~rather expressive logic. Over arbitrary frames, it is $\pspace$-complete for a~specific fragment of the full logic.  

Several interesting research problems remain to be tackled in the future. First, we would like to extend our complexity result for $\KLukProdProb$ beyond finitely branching frames and the result for $\KLukAddProb$ to stronger fragments. Interestingly enough, even for some fragments weaker than $\KLukProdProb$, this would require solving some open problems in many-valued logic. For example, $\mathbf{KP}\Luk^{\Prob}$ (the fragment without~$\onehalf$ and~$\rightarrow_\Pi$) is $\pspace$-complete if first-order $\mathbf{P}\Luk$ \cite{HorcikCintula2004} is complete with respect to witnessed models. We would also like to extend our results to modal logics stronger than $\mathbf{K}$. Second, we would like to compare the expressivity of our framework with the corresponding fragment of \cite{FaginHalpern1994}, especially when it comes to expressing statements about lower and upper probabilities. %Baldi et al.~\cite{BaldiCintulaNoguera2020} show that, in the propositional case, the many-valued approach and the two-valued approach of~\cite{FaginHalpernMegiddo1990} are equivalent in the sense that there are entailment-preserving translations between them. Knowing whether a~similar situation obtains in the modal case as well, cf.~\cite{FaginHalpern1994}, would shed more light on our options when dealing with reasoning about probability in the presence of modality.% We conjecture that this is not the case since many-valued modal language can express comparisons between upper / lower probabilities of events.
% More?
%
%%%%
\bibliographystyle{eptcs}
\bibliography{generic}

\begin{thebibliography}{10}
\providecommand{\bibitemdeclare}[2]{}
\providecommand{\surnamestart}{}
\providecommand{\surnameend}{}
\providecommand{\urlprefix}{Available at }
\providecommand{\url}[1]{\texttt{#1}}
\providecommand{\href}[2]{\texttt{#2}}
\providecommand{\urlalt}[2]{\href{#1}{#2}}
\providecommand{\doi}[1]{doi:\urlalt{https://doi.org/#1}{#1}}
\providecommand{\eprint}[1]{arXiv:\urlalt{https://arxiv.org/abs/#1}{#1}}
\providecommand{\bibinfo}[2]{#2}

\bibitemdeclare{article}{Aumann1999a}
\bibitem{Aumann1999a}
\bibinfo{author}{R.J. \surnamestart Aumann\surnameend} (\bibinfo{year}{1999}): \emph{\bibinfo{title}{Interactive epistemology {II}: Probability}}.
\newblock {\slshape \bibinfo{journal}{International Journal of Game Theory}} \bibinfo{volume}{28}(\bibinfo{number}{3}), pp. \bibinfo{pages}{301--314}, \doi{10.1007/s001820050112}.

\bibitemdeclare{article}{BacchusEtAl1999}
\bibitem{BacchusEtAl1999}
\bibinfo{author}{F.~\surnamestart Bacchus\surnameend}, \bibinfo{author}{J.Y. \surnamestart Halpern\surnameend} \& \bibinfo{author}{H.J. \surnamestart Levesque\surnameend} (\bibinfo{year}{1999}): \emph{\bibinfo{title}{Reasoning about noisy sensors and effectors in the situation calculus}}.
\newblock {\slshape \bibinfo{journal}{Artificial Intelligence}} \bibinfo{volume}{111}(\bibinfo{number}{1-2}), pp. \bibinfo{pages}{171--208}, \doi{10.1016/S0004-3702(99)00031-4}.

\bibitemdeclare{article}{BaldiCintulaNoguera2020}
\bibitem{BaldiCintulaNoguera2020}
\bibinfo{author}{P.~\surnamestart Baldi\surnameend}, \bibinfo{author}{P.~\surnamestart Cintula\surnameend} \& \bibinfo{author}{C.~\surnamestart Noguera\surnameend} (\bibinfo{year}{2020}): \emph{\bibinfo{title}{{Classical and Fuzzy Two-Layered Modal Logics for Uncertainty: Translations and Proof-Theory}}}.
\newblock {\slshape \bibinfo{journal}{International Journal of Computational Intelligence Systems}} \bibinfo{volume}{13}, pp. \bibinfo{pages}{988--1001}, \doi{10.2991/ijcis.d.200703.001}.

\bibitemdeclare{book}{BlackburndeRijkeVenema2010}
\bibitem{BlackburndeRijkeVenema2010}
\bibinfo{author}{P.~\surnamestart Blackburn\surnameend}, \bibinfo{author}{M.~de \surnamestart Rijke\surnameend} \& \bibinfo{author}{Y.~\surnamestart Venema\surnameend} (\bibinfo{year}{2010}): \emph{\bibinfo{title}{Modal logic}}, \bibinfo{edition}{4. print. with corr.} edition.
\newblock \bibinfo{series}{Cambridge tracts in theoretical computer science 53}, \bibinfo{publisher}{Cambridge University Press}.

\bibitemdeclare{inproceedings}{Canny1988}
\bibitem{Canny1988}
\bibinfo{author}{J.~\surnamestart Canny\surnameend} (\bibinfo{year}{1988}): \emph{\bibinfo{title}{Some algebraic and geometric computations in {PSPACE}}}.
\newblock In: {\slshape \bibinfo{booktitle}{Proceedings of the twentieth annual {ACM} symposium on Theory of computing --- {STOC} '88}}, \bibinfo{publisher}{ACM Press}, \bibinfo{address}{New York, New York, USA}, pp. \bibinfo{pages}{460--467}, \doi{10.1145/62212.62257}.

\bibitemdeclare{inproceedings}{CorsiFlaminioGodoHosni2023}
\bibitem{CorsiFlaminioGodoHosni2023}
\bibinfo{author}{E.A. \surnamestart Corsi\surnameend}, \bibinfo{author}{T.~\surnamestart Flaminio\surnameend}, \bibinfo{author}{L.~\surnamestart Godo\surnameend} \& \bibinfo{author}{H.~\surnamestart Hosni\surnameend} (\bibinfo{year}{2023}): \emph{\bibinfo{title}{A modal logic for uncertainty: a completeness theorem}}.
\newblock In \bibinfo{editor}{E.~\surnamestart Miranda\surnameend}, \bibinfo{editor}{I.~\surnamestart Montes\surnameend}, \bibinfo{editor}{E.~\surnamestart Quaeghebeur\surnameend} \& \bibinfo{editor}{B.~\surnamestart Vantaggi\surnameend}, editors: {\slshape \bibinfo{booktitle}{Proceedings of the Thirteenth International Symposium on Imprecise Probability: Theories and Applications}}, {\slshape \bibinfo{series}{Proceedings of Machine Learning Research}} \bibinfo{volume}{215}, \bibinfo{publisher}{PMLR}, pp. \bibinfo{pages}{119--129}.

\bibitemdeclare{incollection}{DoderPerovic2020}
\bibitem{DoderPerovic2020}
\bibinfo{author}{D.~\surnamestart Doder\surnameend} \& \bibinfo{author}{A.~\surnamestart Perović\surnameend} (\bibinfo{year}{2020}): \emph{\bibinfo{title}{Probabilistic Temporal Logics}}.
\newblock In \bibinfo{editor}{Z.~\surnamestart Ognjanović\surnameend}, editor: {\slshape \bibinfo{booktitle}{Probabilistic Extensions of Various Logical Systems}}, \bibinfo{publisher}{Springer International Publishing}, pp. \bibinfo{pages}{71--108}, \doi{10.1007/978-3-030-52954-3_3}.

\bibitemdeclare{article}{EstevaEtAl2001}
\bibitem{EstevaEtAl2001}
\bibinfo{author}{F.~\surnamestart Esteva\surnameend}, \bibinfo{author}{L.~\surnamestart Godo\surnameend} \& \bibinfo{author}{F.~\surnamestart Montagna\surnameend} (\bibinfo{year}{2001}): \emph{\bibinfo{title}{The $L\Pi$ and $L\Pi\frac{1}{2}$ logics: two complete fuzzy systems joining Łukasiewicz and Product Logics}}.
\newblock {\slshape \bibinfo{journal}{Archive for Mathematical Logic}} \bibinfo{volume}{40}(\bibinfo{number}{1}), pp. \bibinfo{pages}{39--67}, \doi{10.1007/s001530050173}.

\bibitemdeclare{article}{FaginHalpern1994}
\bibitem{FaginHalpern1994}
\bibinfo{author}{R.~\surnamestart Fagin\surnameend} \& \bibinfo{author}{J.Y. \surnamestart Halpern\surnameend} (\bibinfo{year}{1994}): \emph{\bibinfo{title}{Reasoning about knowledge and probability}}.
\newblock {\slshape \bibinfo{journal}{Journal of the {ACM}}} \bibinfo{volume}{41}(\bibinfo{number}{2}), pp. \bibinfo{pages}{340--367}, \doi{10.1145/174652.174658}.

\bibitemdeclare{article}{FaginHalpernMegiddo1990}
\bibitem{FaginHalpernMegiddo1990}
\bibinfo{author}{R.~\surnamestart Fagin\surnameend}, \bibinfo{author}{J.Y. \surnamestart Halpern\surnameend} \& \bibinfo{author}{N.~\surnamestart Megiddo\surnameend} (\bibinfo{year}{1990}): \emph{\bibinfo{title}{A logic for reasoning about probabilities}}.
\newblock {\slshape \bibinfo{journal}{Information and computation}} \bibinfo{volume}{87}(\bibinfo{number}{1--2}), pp. \bibinfo{pages}{78--128}, \doi{10.1016/0890-5401(90)90060-U}.

\bibitemdeclare{article}{Flaminio2007}
\bibitem{Flaminio2007}
\bibinfo{author}{Tommaso \surnamestart Flaminio\surnameend} (\bibinfo{year}{2007}): \emph{\bibinfo{title}{{NP}-containment for the coherence test of assessments of conditional probability: a fuzzy logical approach}}.
\newblock {\slshape \bibinfo{journal}{Archive for Mathematical Logic}} \bibinfo{volume}{46}(\bibinfo{number}{3–4}), pp. \bibinfo{pages}{301--319}, \doi{10.1007/s00153-007-0045-3}.

\bibitemdeclare{article}{Gerla2001}
\bibitem{Gerla2001}
\bibinfo{author}{Brunella \surnamestart Gerla\surnameend} (\bibinfo{year}{2001}): \emph{\bibinfo{title}{Rational {\L}ukasiewicz logic and {DMV}-algebras}}.
\newblock {\slshape \bibinfo{journal}{Neural Networks World}} \bibinfo{volume}{11}, pp. \bibinfo{pages}{579--584}.

\bibitemdeclare{article}{GodoMarchioni2006}
\bibitem{GodoMarchioni2006}
\bibinfo{author}{L.~\surnamestart Godo\surnameend} \& \bibinfo{author}{E.~\surnamestart Marchioni\surnameend} (\bibinfo{year}{2006}): \emph{\bibinfo{title}{{Coherent Conditional Probability in a Fuzzy Logic Setting}}}.
\newblock {\slshape \bibinfo{journal}{Logic Journal of IGPL}} \bibinfo{volume}{14}(\bibinfo{number}{3}), pp. \bibinfo{pages}{457--481}, \doi{10.1093/jigpal/jzl019}.

\bibitemdeclare{book}{Hajek1998}
\bibitem{Hajek1998}
\bibinfo{author}{P.~\surnamestart H\'{a}jek\surnameend} (\bibinfo{year}{1998}): \emph{\bibinfo{title}{{Metamathematics of Fuzzy Logic}}}.
\newblock \bibinfo{series}{Trends in Logic 4}, \bibinfo{publisher}{Springer}, \bibinfo{address}{Dordrecht}, \doi{10.1007/978-94-011-5300-3}.

\bibitemdeclare{inproceedings}{HajekEtAl1995}
\bibitem{HajekEtAl1995}
\bibinfo{author}{P.~\surnamestart H{\'a}jek\surnameend}, \bibinfo{author}{L.~\surnamestart Godo\surnameend} \& \bibinfo{author}{F.~\surnamestart Esteva\surnameend} (\bibinfo{year}{1995}): \emph{\bibinfo{title}{Fuzzy logic and probability}}.
\newblock In \bibinfo{editor}{P.~\surnamestart Besnard\surnameend} \& \bibinfo{editor}{S.~\surnamestart Hanks\surnameend}, editors: {\slshape \bibinfo{booktitle}{UAI'95: Proceedings of the Eleventh Conference on Uncertainty in Artificial Intelligence}}, \bibinfo{publisher}{Morgan Kaufmann}, \bibinfo{address}{San Francisco}, pp. \bibinfo{pages}{237--244}, \doi{10.5555/2074158.2074185}.

\bibitemdeclare{article}{HajekEtAl2000}
\bibitem{HajekEtAl2000}
\bibinfo{author}{P.~\surnamestart H{\'a}jek\surnameend}, \bibinfo{author}{L.~\surnamestart Godo\surnameend} \& \bibinfo{author}{F.~\surnamestart Esteva\surnameend} (\bibinfo{year}{2000}): \emph{\bibinfo{title}{Reasoning about probability using fuzzy logic}}.
\newblock {\slshape \bibinfo{journal}{Neural Network World}} \bibinfo{volume}{10}(\bibinfo{number}{5}), pp. \bibinfo{pages}{811--824}.

\bibitemdeclare{article}{HajekTulipani2001}
\bibitem{HajekTulipani2001}
\bibinfo{author}{P.~\surnamestart H{\'a}jek\surnameend} \& \bibinfo{author}{S.~\surnamestart Tulipani\surnameend} (\bibinfo{year}{2001}): \emph{\bibinfo{title}{Complexity of fuzzy probability logics}}.
\newblock {\slshape \bibinfo{journal}{Fundamenta Informaticae}} \bibinfo{volume}{45}(\bibinfo{number}{3}), pp. \bibinfo{pages}{207--213}, \doi{10.3233/FUN-2001-45304}.

\bibitemdeclare{article}{HalpernPucella2002}
\bibitem{HalpernPucella2002}
\bibinfo{author}{J.~Y. \surnamestart Halpern\surnameend} \& \bibinfo{author}{R.~\surnamestart Pucella\surnameend} (\bibinfo{year}{2002}): \emph{\bibinfo{title}{A Logic for Reasoning about Upper Probabilities}}.
\newblock {\slshape \bibinfo{journal}{Journal of Artificial Intelligence Research}} \bibinfo{volume}{17}, pp. \bibinfo{pages}{57--81}, \doi{10.1613/jair.985}.

\bibitemdeclare{book}{Halpern2017}
\bibitem{Halpern2017}
\bibinfo{author}{J.Y. \surnamestart Halpern\surnameend} (\bibinfo{year}{2017}): \emph{\bibinfo{title}{Reasoning about uncertainty}}, \bibinfo{edition}{2nd} edition.
\newblock \bibinfo{publisher}{The MIT Press}, \doi{10.7551/mitpress/10951.001.0001}.

\bibitemdeclare{article}{HalpernReif1983}
\bibitem{HalpernReif1983}
\bibinfo{author}{J.Y. \surnamestart Halpern\surnameend} \& \bibinfo{author}{J.H. \surnamestart Reif\surnameend} (\bibinfo{year}{1983}): \emph{\bibinfo{title}{The propositional dynamic logic of deterministic, well-structured programs}}.
\newblock {\slshape \bibinfo{journal}{Theoretical Computer Science}} \bibinfo{volume}{27}(\bibinfo{number}{1--2}), pp. \bibinfo{pages}{127--165}, \doi{10.1016/0304-3975(83)90097-x}.

\bibitemdeclare{article}{HansoulTeheux2013}
\bibitem{HansoulTeheux2013}
\bibinfo{author}{G.~\surnamestart Hansoul\surnameend} \& \bibinfo{author}{B.~\surnamestart Teheux\surnameend} (\bibinfo{year}{2013}): \emph{\bibinfo{title}{Extending {{\L}}ukasiewicz Logics with a Modality: {A}lgebraic Approach to Relational Semantics}}.
\newblock {\slshape \bibinfo{journal}{Studia Logica}} \bibinfo{volume}{101}(\bibinfo{number}{3}), pp. \bibinfo{pages}{505--545}, \doi{10.1007/s11225-012-9396-9}.

\bibitemdeclare{article}{HeifetzMongin2001}
\bibitem{HeifetzMongin2001}
\bibinfo{author}{A.~\surnamestart Heifetz\surnameend} \& \bibinfo{author}{P.~\surnamestart Mongin\surnameend} (\bibinfo{year}{2001}): \emph{\bibinfo{title}{Probability Logic for Type Spaces}}.
\newblock {\slshape \bibinfo{journal}{Games and Economic Behavior}} \bibinfo{volume}{35}(\bibinfo{number}{1–2}), pp. \bibinfo{pages}{31--53}, \doi{10.1006/game.1999.0788}.

\bibitemdeclare{article}{HorcikCintula2004}
\bibitem{HorcikCintula2004}
\bibinfo{author}{R.~\surnamestart Hor\v{c}\'{i}k\surnameend} \& \bibinfo{author}{P.~\surnamestart Cintula\surnameend} (\bibinfo{year}{2004}): \emph{\bibinfo{title}{Product {{\L}}ukasiewicz Logic}}.
\newblock {\slshape \bibinfo{journal}{Archive for Mathematical Logic}} \bibinfo{volume}{43}(\bibinfo{number}{4}), pp. \bibinfo{pages}{477--503}, \doi{10.1007/s00153-004-0214-6}.

\bibitemdeclare{inproceedings}{HajekHarmancova1995}
\bibitem{HajekHarmancova1995}
\bibinfo{author}{P.~\surnamestart Hájek\surnameend} \& \bibinfo{author}{D.~\surnamestart Harmancová\surnameend} (\bibinfo{year}{1995}): \emph{\bibinfo{title}{Medical fuzzy expert systems and reasoning about beliefs}}.
\newblock In: {\slshape \bibinfo{booktitle}{Artificial Intelligence in Medicine. AIME 1995}}, \bibinfo{publisher}{Springer Berlin Heidelberg}, pp. \bibinfo{pages}{403--404}, \doi{10.1007/3-540-60025-6_159}.

\bibitemdeclare{inproceedings}{KozenEtAl2013}
\bibitem{KozenEtAl2013}
\bibinfo{author}{D.~\surnamestart Kozen\surnameend}, \bibinfo{author}{R.~\surnamestart Mardare\surnameend} \& \bibinfo{author}{P.~\surnamestart Panangaden\surnameend} (\bibinfo{year}{2013}): \emph{\bibinfo{title}{Strong Completeness for Markovian Logics}}.
\newblock In \bibinfo{editor}{K.~\surnamestart Chatterjee\surnameend} \& \bibinfo{editor}{J.~\surnamestart Sgall\surnameend}, editors: {\slshape \bibinfo{booktitle}{Mathematical Foundations of Computer Science 2013}}, {\slshape \bibinfo{series}{Lecture Notes in Computer Science}} \bibinfo{volume}{8087}, \bibinfo{publisher}{Springer}, \bibinfo{address}{Berlin, Heidelberg}, pp. \bibinfo{pages}{655--666}, \doi{10.1007/978-3-642-40313-2_58}.

\bibitemdeclare{inproceedings}{MajerSedlar2025}
\bibitem{MajerSedlar2025}
\bibinfo{author}{O.~\surnamestart Majer\surnameend} \& \bibinfo{author}{I.~\surnamestart Sedlár\surnameend} (\bibinfo{year}{2025}): \emph{\bibinfo{title}{On Many-Valued Modal Probabilistic Logics}}.
\newblock In: {\slshape \bibinfo{booktitle}{IEEE International Symposium on Multiple-Valued Logic (ISMVL 2025)}}, pp. \bibinfo{pages}{26--31}, \doi{10.1109/ISMVL64713.2025.00014}.

\bibitemdeclare{article}{Marchioni2008}
\bibitem{Marchioni2008}
\bibinfo{author}{E.~\surnamestart Marchioni\surnameend} (\bibinfo{year}{2008}): \emph{\bibinfo{title}{Representing Upper Probability Measures over Rational Lukasiewicz Logic.}}
\newblock {\slshape \bibinfo{journal}{Mathware and Soft Computing}} \bibinfo{volume}{15}(\bibinfo{number}{2}), pp. \bibinfo{pages}{159--173}.
\newblock \urlprefix\url{http://eudml.org/doc/42219}.

\bibitemdeclare{book}{OgnjanovicEtAl2016}
\bibitem{OgnjanovicEtAl2016}
\bibinfo{author}{Z.~\surnamestart Ognjanović\surnameend}, \bibinfo{author}{M.~\surnamestart Rašković\surnameend} \& \bibinfo{author}{Z.~\surnamestart Marković\surnameend} (\bibinfo{year}{2016}): \emph{\bibinfo{title}{{Probability Logics: Probability-Based Formalization of Uncertain Reasoning}}}.
\newblock \bibinfo{publisher}{Springer International Publishing}, \bibinfo{address}{Cham}, \doi{10.1007/978-3-319-47012-2}.

\bibitemdeclare{article}{Vidal2021}
\bibitem{Vidal2021}
\bibinfo{author}{A.~\surnamestart Vidal\surnameend} (\bibinfo{year}{2021}): \emph{\bibinfo{title}{On transitive modal many-valued logics}}.
\newblock {\slshape \bibinfo{journal}{Fuzzy Sets and Systems}} \bibinfo{volume}{407}, pp. \bibinfo{pages}{97--114}, \doi{10.1016/j.fss.2020.01.011}.

\bibitemdeclare{article}{Vidal2022}
\bibitem{Vidal2022}
\bibinfo{author}{A.~\surnamestart Vidal\surnameend} (\bibinfo{year}{2022}): \emph{\bibinfo{title}{Undecidability and non-axiomatizability of modal many-valued logics}}.
\newblock {\slshape \bibinfo{journal}{The Journal of Symbolic Logic}} \bibinfo{volume}{87}(\bibinfo{number}{4}), pp. \bibinfo{pages}{1576--1605}, \doi{10.1017/jsl.2022.32}.

\bibitemdeclare{phdthesis}{Zhou2007}
\bibitem{Zhou2007}
\bibinfo{author}{C.~\surnamestart Zhou\surnameend} (\bibinfo{year}{2007}): \emph{\bibinfo{title}{{Complete Deductive Systems for Probability Logic with Application to Harsanyi Type Spaces}}}.
\newblock \bibinfo{type}{{PhD Thesis}}, \bibinfo{school}{Indiana University}.

\end{thebibliography}
\end{document}